\newtheorem{theorem}{Theorem}
\newtheorem{corollary}{Corollary}
\newtheorem{proposition}{Proposition}
\newtheorem{lemma}{Lemma}
\begin{document}

\title{A Game-Theoretic Framework for Resilient and Distributed Generation Control of Renewable Energies in Microgrids }

\author{
          Juntao~Chen,~\IEEEmembership{Student Member,~IEEE,}
       and Quanyan~Zhu,~\IEEEmembership{Member,~IEEE}
\thanks{This paper has been accepted to publish in \textit{IEEE Transactions on Smart Grid}.}
\thanks{This work was supported by the U.S. National Science Foundation under Grants EFMA-1441140, ECCS-1550000 and SES-1541164.}
\thanks{The authors are with the Department of Electrical and Computer Engineering, Tandon School of Engineering, New York University, Brooklyn, NY, 11201 USA. E-mail: \{juntao.chen, quanyan.zhu\}@nyu.edu.}
}

\maketitle

\begin{abstract}
The integration of microgrids that depend on the renewable distributed energy resources with the current power systems is a critical issue in the smart grid. In this paper, we propose a non-cooperative game-theoretic framework to study the strategic behavior of distributed microgrids that generate renewable energies and characterize the power generation solutions by using the Nash equilibrium concept. Our framework not only incorporates economic factors but also takes into account the stability and efficiency of the microgrids, including the power flow constraints and voltage angle regulations. We develop two decentralized update schemes for microgrids and show their convergence to a unique Nash equilibrium. Also, we propose a novel fully distributed PMU-enabled algorithm which only needs the information of voltage angle at the bus. To show the resiliency of the distributed algorithm, we introduce two failure models of the smart grid. Case studies based on the IEEE 14-bus system are used to corroborate the effectiveness and resiliency of the proposed algorithms.

\end{abstract}

\begin{IEEEkeywords}
Microgrids, Renewable energy, resilient, distributed control, non-cooperative game, power flow, smart grid.
\end{IEEEkeywords}

\section{Introduction}
Renewable energies, such as solar, wind energy, geothermal and biomass, play an important role in reducing the emission of greenhouse gases and thus are able to mitigate the climate change. Their gradual replacements of the conventional power plants, which generate air polluting by-products including $\mathrm{SO}_2$ and $\mathrm{NO}_x$, are beneficial to reduce the health risks to the human society \cite{ackermann2001distributed}.  According to the energy report Annual Energy Outlook 2015, the total renewable share of all electricity generation increases from 13\% in 2013 to 18\% in 2040 \cite{AEO}. Therefore, it is critical to motivate more participants to generate renewable energies, and thus transform the traditional power grid to a cleaner and more efficient system.

A microgrid is a green system that relies on the renewable distributed resources such as wind turbines, photovoltaics and fuel cells, and it is able to operate independently from the main power grid in an autonomous manner \cite{driesen2008design,hatziargyriou2007microgrids}. Currently, more and more microgrids are integrated with the main power grid for system dependability and resiliency \cite{costa2009assessing,colson2011distributed}. For example, when a generator in the main grid is out of service, microgrids can generate extra power, and inject it into the grid to meet the demand and regulate voltage. Hence, they can play the role of maintaining the stability and reliability of the power system.

 \begin{figure}[!t]
\centering
\includegraphics[width=2.5in]{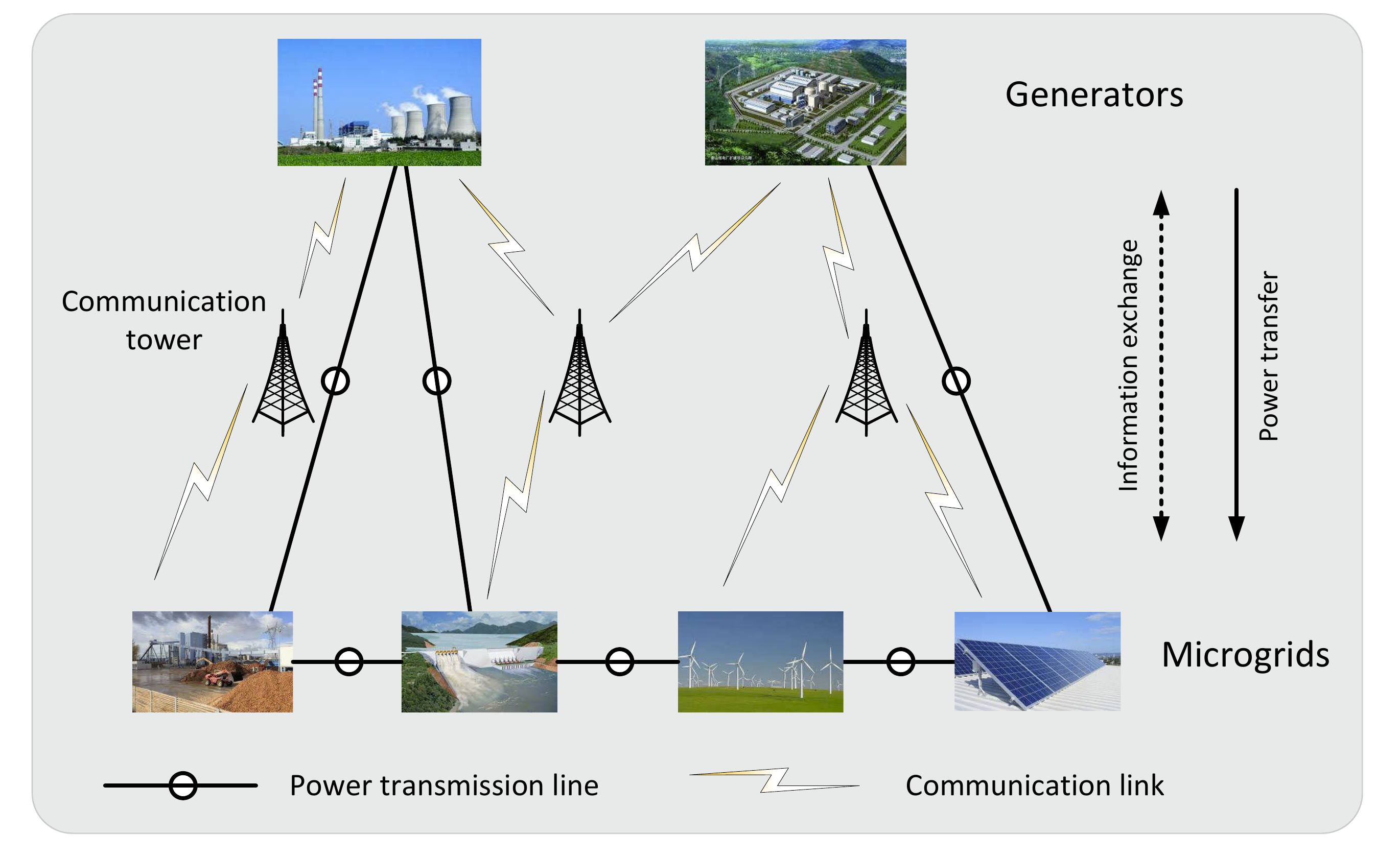}
\caption{Smart grid hierarchy model including the generators, microgrids and communications. Generators in the upper layer determine their amount of power generations and the electricity price and send them to the bottom layer. A microgrid can generate renewable energies and make decisions by responding to the strategies of generators and other microgrids.}\label{twolayer}
\end{figure}

In this paper, we consider the distributed energy management of microgrids when they are integrated with the power grid. With deregulation, future microgrids can enter the electricity market to sell renewable energies at a cheaper price. Reversely, they can also buy electricity from other agents. Each microgrid is making a decision on the amount of power generation to optimize their payoffs. In addition, their strategies are based on the physical and economic factors including the power flow in the system, renewable energy production limit, electricity market price and generation cost \cite{ li2012automated, kasbekar2012pricing, zhu2012game}. The participation of microgrids into the power market, therefore, introduces a competing mechanism between different microgrid entities. Fig. \ref{twolayer} depicts a two-layer system model of the smart grid. Specifically, the upper layer containing conventional generators forms a generator network, and the distributed renewable energy generators in the bottom layer constitutes the microgrid network. The information exchange, such as the electricity market price and the amount of power generation, between two layers are through the communication network lay in the middle. 
On one hand, generators generate power supplied to the loads in the grid, and we assume that their amount is determined ahead of time and thus fixed. On the other hand, microgrids are capable of energy generation to meet their own needs or sell the extra power to the energy market to make a profit.

A natural framework to capture the competition and decentralized decision making of microgrids is game theory \cite{basar1995dynamic,saad2012game}. We use a non-cooperative game-theoretic framework to study the strategic behavior of microgrids in the power grid from a cross-layer perspective. Thus, microgrids are players in this renewable energy game. Different from the centralized control of power systems which is based on the optimal power flow, we adopt Nash equilibrium as our solution concept to characterize the control actions of microgrids. To design an automated energy management system, we develop two update schemes including the iterative update algorithm and random update algorithm which enable the independent decision making of microgrids. However, they need the support from the communication network as depicted in Fig. \ref{twolayer} for the information exchange. To make the control system more intelligent and efficient, we also propose a PMU-enabled distributed algorithm that only depends on the smart device, phasor measurement unit (PMU) \cite{de2010synchronized}, to measure the voltage angles at the buses. By using this scheme, each microgrid does not need to know the specific amount of generation of other players, nor the power supplied by the generators, but can randomly update its strategy only based on the phasor angle at its bus, which preserves high privacy.

The distributed iterative algorithm developed in this work enhances the resilience of the grid by enabling the real-time response to system changes. Power systems tend to fail, such as generator breakdown and trigger of the power transmission lines, due to natural disasters, man-made mistakes and cyber attacks \cite{sridhar2012cyber,glover2011power,kwasinski2012availability}. Therefore, by investigating the two-layer smart grid framework under these faulty scenarios, we can assess the resiliency of the proposed PMU-enabled algorithm.

The contributions of this paper are summarized as follow.
\begin{enumerate}
\item We establish a non-cooperative game-theoretic framework to model the renewable energy generation planning of microgrids in the smart grid, especially by considering the physical and economic constraints.
\item We propose two iterative and random update algorithms for the decision making of microgrids and derive sufficient conditions to ensure their convergence to a unique Nash equilibrium.
\item We develop a resilient and fully distributed PMU-enabled update algorithm for microgrids, and design its implementation control framework. Moreover, two major fault models of the smart grid are introduced to validate the resiliency of the distributed algorithm.
\end{enumerate}

\subsection{Related Work}
Game-theoretic methods have been used for generation planning and control of the distributed energy resources in smart grid \cite{saad2012game,mohsenian2010autonomous,nekouei2015game,fadlullah2014gtes}. Zhu \textit{et al.} \cite{zhu2012game} have proposed a non-cooperative game framework for distributed generation in power systems, and considered economic and AC power flow constraints. Our work extends their model and algorithms to a distributed and resilient fashion. A two-level game-theoretic approach has been used to model the demand response in the cross-layer smart grid framework shown in Fig. \ref{twolayer} which includes both user and utility side \cite{maharjan2013dependable,chai2014demand,zhu2013value}. They also developed distributed algorithms to find the equilibrium solution. However, their main focuses have been on the economic dispatch and generations, and physical constraints, e.g., power flow equations \cite{glover2011power,dall2013distributed}, are not part of their model. Our work is focused on the interactions in the lower layer, and it can be naturally extended to the two-level framework by regarding the generators as active players. 

\subsection{Organization of the Paper}
The rest of the paper is organized as follows. In Section \ref{problem}, we formulate the problem and present a game-theoretic framework to model the generation control in microgrids. We analyze the game in Section \ref{Nash} and propose three update schemes to find the equilibrium solution in Section \ref{schemes}. Implementation framework of the PMU-enabled distributed algorithm is developed in Section \ref{resilience}. Case studies are given in Section \ref{simulation}, and Section \ref{conclusion} concludes the paper. 

\subsection{Notations and Conventions}
Notations and conventions adopted in this paper are summarized as follows: $\mathbf{P}:=[P_1,P_2,...,P_N]'\in\mathbb{R}^N$, $\mathbf{\theta}:=[\theta_1,\theta_2,...,\theta_N]'\in\mathbb{R}^N$; $\mathbf{P}_d:=[P_1,P_2,...,P_{N_d}]'\in\mathbb{R}^{N_d}$, $\mathbf{P}_d^g:=[P_1^g,P_2^g,...,P_{N_d}^g]'\in\mathbb{R}^{N_d}$; $\mathbf{q}:=[q_1,q_2,...,q_{N_d}]'\in\mathbb{R}^{N_d}$; Superscripts $o$ and $*$ indicate the parameter achieving the optimal and equilibrium, respectively; Use of $(n)$ in the superscript indicates the $n$th iteration. In addition, \textit{microgrid} and \textit{player} refer to the same entity, and they are used interchangeably.

\section{Problem Formulation}\label{problem}
In this section, we first review some power flow basics, and then formulate the problem.

\subsection{Power Flow Preliminaries}
In a power grid, let $\mathcal{N}:=\{r,1,2,...,N\}$ be a set of $N+1$ buses, where $r$ denotes the slack bus. Denote $P_i$, $Q_i$, $V_i$ and $\theta_i$ as the amount of active power injection, reactive power injection, voltage magnitude and voltage angle at bus $i$, respectively. Then, the power flow equations of the system with reference to the slack bus $r$ are
\begin{align}
P_i=\sum\limits_{j\in\mathcal{N}} V_iV_j[G_{ij}\cos(\theta_i-\theta_j)+B_{ij}\sin(\theta_i-\theta_j)],\\
Q_i=\sum\limits_{j\in\mathcal{N}} V_iV_j[G_{ij}\sin(\theta_i-\theta_j)-B_{ij}\cos(\theta_i-\theta_j)],
\end{align}
for $i,j=1,2,...,N$, where $G_{ij}$ and $B_{ij}$ are the real and imaginary part of the element $(i,j)$ in the admittance matrix $\mathbf{Y}\in\mathbb{C}^{N\times N}$ of the power grid. Note that $V_r$ and $\theta_r$ are both known, and specifically, $\theta_r=0$.

Let $P_i^g$ and $P_i^l$ be the power generation and power load at bus $i$, respectively. Then, the active power injection at bus $i$ satisfies
\begin{equation}\label{injection}
P_i=P_i^g-P_i^l,\ \forall i\in\mathcal{N}.
\end{equation}
Moreover, by considering the balance of the grid, we have
$\sum_{i\in\mathcal{N}} P_i^g=\sum_{i\in\mathcal{N}} P_i^l.$

DC approximation is usually used for fast calculations of the power flow \cite{glover2011power}. Assume that the reactance is much smaller than the resistance on transmission lines; the voltage angles $\theta_i,\ i\in \mathcal{N}$, are small, and the voltages $V_i,\ i\in\mathcal{N}$, are equal to 1. Then, $Q_i=0,\ i\in\mathcal{N}$, and $\sin(\theta_i-\theta_j)\approx \theta_i-\theta_j$, $\cos(\theta_i-\theta_j)\approx 1$. Therefore, power flow equations can be represented by a set of linear equations as
$
P_i=\sum_{j\neq i}B_{ij}(\theta_i-\theta_j),\ \forall i,j\in\mathcal{N},
$
which can be written in a matrix form as
\begin{equation}\label{power_dc}
\mathrm{\textbf{P}}=-\mathrm{\textbf{B}}\boldsymbol{\theta}.
\end{equation}

\textit{\textbf{Remark 1}}: Matrix \textbf{B} includes the imaginary components of \textbf{Y} \textit{except} the slack bus's row and column. Since $-\textbf{B}$ is a symmetric reduced Laplacian matrix, then, \textbf{B} is invertible by the Kirchhoff's matrix-tree theorem \cite{chaiken1982combinatorial}.

\subsection{Game-Theoretic Framework}
Consider the smart grid is composed of load buses and generator buses. We denote $\mathcal{N}_d:=\{1,2,...,N_d\}\subseteq \mathcal{N}$ as the set of $N_d$ buses that can generate renewable energies such as the wind and solar power. Therefore, buses in the set $\mathcal{N}_d$ are able to generate power for self-efficiency. In addition, denote $\mathcal{N}_g:=\{N_d+1,N_d+2,...,N\}=\mathcal{N}\setminus \mathcal{N}_d$ as a set of $N_g$ buses that are either PQ-loads or generators, where set $\mathcal{N}$ excludes the slack bus $r$ for notation clarity. For load bus $i\in\mathcal{N}$, it has specified load $P_i^l$, and for generator bus $j\in\mathcal{N}$, it has predetermined generation $P_j^g$.

Furthermore, for buses in the set of $\mathcal{N}_d$, they are connected with the microgrids that are able to generate renewable energies. Their loads to be serviced are specified ahead of time, and their generations of renewable energy need to be determined. Note that each microgrid has a maximum generation, i.e., 
$0\leq P^g_i \leq P^g_{i,\max},\ i\in\mathcal{N}_d.$ $P^g_{i,\max}$ depends on the available stored power of the microgrid, and it is dynamically changing due to the intermittent nature of renewable generations. For convenience, the framework is studied for a given $P^g_{i,\max}$ in this paper, and it can be generalized to dynamic $P^g_{i,\max}$ if the state-of-charge of storage devices is considered. In addition, for buses in $\mathcal{N}_g$, we set the power generation of PQ-load bus to 0, and the power load of generator bus to 0 without loss of generality,.

Before formulating the game-theoretic framework for this power generation game, we make several assumptions as follows.
First, the topology of the whole power system is known to all microgrids. This is justifiable since the parameters of power transmission lines, such as resistance and reactance, are often known. Second, the constraints of each microgrid are common information, and each microgrid is aware of the physical constraints when making decisions \cite{levron2013optimal,wan2010optimal}. This indicates that every microgrid should take the power flow constraints \eqref{power_dc} into account. Third, power generations are given for all generators, and microgrids take actions by responding to the generator network. This is reasonable since microgrids can regulate itself more quickly than the generators, and they can be viewed as followers who respond to the generators \cite{zhu2012game,maharjan2013dependable,chai2014demand}. We also assume that PMU can be employed to measure the voltage angle at the bus for all microgrids which is already a mature technology in the smart grid \cite{de2010synchronized}.

Let  $G:=\big\{\mathcal{N}_d,{\{ \mathcal{P}^{\mathcal{G}}_i,\Theta_i\} }_{i\in \mathcal{N}_d},{\{ U_i \}}_{i\in \mathcal{N}_d}, \mathcal{P} \big\}$ be a strategic game with a set $\mathcal{N}_d$ of $N_d$ players. ${\{ \mathcal{P}^{\mathcal{G}}_i,\Theta_i\} }$ is the action set of player $i$, where
$\mathcal{P}^{\mathcal{G}}_i:=\{ P_i^g\in\mathbb{R}_+\ |\ 0\leq P_i^g \leq P^g_{i,\max} \}$, 
and $\mathcal{P}$ is the feasible set of active power injection defined by constraint \eqref{power_dc}. For convenience, denote $\{\mathcal{P}^{\mathcal{G}}_{-i},\Theta_{-i} \}$ by the Cartesian product of all players' action sets except $i$'th one. In addition, denote $\mathcal{P}^{\mathcal{G}}$ by the feasible set of power generation of all buses in the grid, which can be obtained by using $\mathcal{P}$ through \eqref{injection}. Then, the feasible renewable generation set of game $G$ can be defined by $\mathcal{P}^{\mathcal{G}}_F:=\big( \otimes_{i\in \mathcal{N}_d} \mathcal{P}^{\mathcal{G}}_i \big)\cap\mathcal{P}^{\mathcal{G}}$. In addition, based on the set $\mathcal{P}^{\mathcal{G}}_F$, we can obtain the feasible voltage angle profile of all players $\Theta_F$ through \eqref{power_dc}. Thus, the feasible action set of all players $\{ P_i^g,\theta_i\}_{i\in\mathcal{N}_d}$ can be defined as $\mathcal{F}:=\mathcal{P}^{\mathcal{G}}_F \times \Theta_F$.  Note that game $G$ is coupled through the power flow equations.

The cost function $U_i:\mathcal{P}_i^{\mathcal{G}}\times [0,\pi]\rightarrow \mathbb{R}$ for player $i$ is given by
\begin{equation}\label{utility}
U_i(P_i^g,\theta_i)=\psi_i P_i^g+\zeta(P_i^l-P_i^g)+\frac{1}{2}\eta_i^2\theta_i^2,\ \ i\in \mathcal{N}_d,
\end{equation}
where $\psi_i$ is the unit cost of generated power for player $i$, $\zeta$ is the unit price of renewable energy for sale defined by the power market, and $\eta_i$ is a weighting parameter that indicates the importance of regulations of voltage angle at bus $i$. Note that in \eqref{utility}, player $i$ prefers a small voltage angle $\eta_i$ which satisfies a condition of using DC approximated power flow.

Then, the optimization problem for player $i$ is
\begin{align*}
\mathrm{OP}_i: \quad\min\limits_{P_i^g,\theta_i}\quad &U_i(P_i^g,\theta_i)\\
\mathrm{s.t.}\quad \mathrm{\textbf{P}}&=-\mathrm{\textbf{B}}\boldsymbol{\theta},\\
0\leq& P_i^g \leq P_{i,\max}^g,\ i\in\mathcal{N}_d.
\end{align*}

The renewable generation game $G$ is a constrained game. A Nash equilibrium (NE) solution pair $(\mathbf{P}_d^{g*},\boldsymbol{\theta}_d^*)$, where $\mathbf{P}_d^{g*} = [P_i^{g*}]_{i\in\mathcal{N}_d},\ \boldsymbol{\theta}_d^* = [ \theta_i^* ]_{i\in\mathcal{N}_d}$, is a point where no player can benefit from deviating from it through changing his strategy. The formal definition of NE of game $G$ is as follows.

\textit{\textbf{Definition 1} (Nash Equilibrium of Game $G$ with Coupled Constraints)}: The solution pair $(\mathbf{P}_d^{g*},\boldsymbol{\theta}_d^*)$ constitutes a \textit{Nash equilibrium} point for the microgrids in the non-cooperative game ${G}$ if, for $\forall i\in \mathcal{N}_d$,
$${U}_i(P_i^{g*},\theta_i^*)\leq {U}_i(P_i^g,\theta_i),\ \forall (P_i^g,\theta_i)\in \Phi_i(P_{-i}^{g*},\theta_{-i}^*),$$ 
where $\Phi_i(P_{-i}^{g*},\theta_{-i}^*)$ is a projected constraint set defined by
$\Phi_i(P_{-i}^{g*},\theta_{-i}^*):=\{ (P_i^g,\theta_i):(P_i^g,\theta_i;P_{-i}^{g*},\theta_{-i}^*) \in\mathcal{F}  \}$.
 
\subsection{Team Problem}
For comparison, we formulate a team problem (TP) that captures the global optimality of the smart grid as 
\begin{align*}
\mathrm{TP}: \quad\min\limits_{P_i^g,\theta_i}\quad &\sum_{i\in \mathcal{N}_d} \alpha_i U_i(P_i^g,\theta_i)\\
\mathrm{s.t.}\quad &\mathrm{\textbf{P}}=-\mathrm{\textbf{B}}\boldsymbol{\theta},\\
&0\leq P_i^g \leq P_{i,\max}^g,\ i\in\mathcal{N}_d,
\end{align*}
where $\alpha_i\in (0,1),i\in \mathcal{N}_d$, are the weights on microgrids such that $\sum_{i=1}^{N_d}\alpha_i=1$. Note that the weighting constant $\alpha_i$ indicates the importance of microgrid $i$ among all players, and it is chosen or assigned by a system coordinator in the smart grid.

Denote the optimal solution to the TP as $\mathbf{P}_d^{go}=[P_i^{go}]_{i\in\mathcal{N}_d},\ \boldsymbol{\theta}_d^o = [\theta_i^o]_{i\in\mathcal{N}_d}$. Then, the loss of efficiency (LOE) due to the decentralized decision making is defined as
\begin{align*}
\mathrm{LOE}:= \frac{\sum_{i \in \mathcal{N}_d} \alpha_i U_i(P_i^{g*},\theta_i^*)}{\sum_{i \in \mathcal{N}_d} \alpha_i U_i(P_i^{go},\theta_i^o)}.
\end{align*}

\textit{\textbf{Remark 2}}: The value of LOE satisfies $0<\mathrm{LOE}\leq 1$. 

\section{Analysis of the Game}\label{Nash}
Before finding the solution to the formulated optimization problem in Section \ref{problem}, we reformulate and analyze the renewable generation game in this section.
  
Since \textbf{B} is nonsingular, then, \eqref{power_dc} can be rewritten as 
\begin{equation}\label{theta}
\boldsymbol\theta=\textbf{SP},
\end{equation}
where \textbf{S}:=$[s_{ij}]_{i,j\in \mathcal{N}}=-\textbf{B}^{-1}$. The property of elements in matrix \textbf{S} is summarized in lemma 1.

\begin{lemma}\label{lemma_Smatrix}
 \textbf{S} is a symmetric matrix, and ${s_{ij}\geq 0},\ \forall\ i,j\in\mathcal{N}$, and especially $s_{ii}>0,\ \forall\ i\in\mathcal{N}$.
 \end{lemma}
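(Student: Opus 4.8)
The plan is to read off all three claims from the structural fact, recorded in Remark~1, that $-\mathbf{B}$ is a \emph{symmetric reduced Laplacian}. Being a principal submatrix of a graph Laplacian, $-\mathbf{B}$ is positive semidefinite; since Remark~1 also guarantees it is nonsingular, $-\mathbf{B}$ is in fact symmetric positive definite, with strictly positive diagonal entries and nonpositive off-diagonal entries (i.e. a Stieltjes matrix). Symmetry of $\mathbf{S}$ is then immediate from $\mathbf{S}=-\mathbf{B}^{-1}$, since $\mathbf{S}'=-(\mathbf{B}')^{-1}=-\mathbf{B}^{-1}=\mathbf{S}$.

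For the sign pattern of the entries I would use an M-matrix splitting. Set $c:=\max_{i\in\mathcal{N}}(-\mathbf{B})_{ii}>0$ (positive because $-\mathbf{B}$ is positive definite) and $\mathbf{M}:=c\mathbf{I}+\mathbf{B}$, so that $-\mathbf{B}=c\mathbf{I}-\mathbf{M}$. By the Laplacian structure, $\mathbf{M}$ is entrywise nonnegative: its diagonal entries equal $c$ minus a diagonal entry of $-\mathbf{B}$, hence are $\geq 0$, while $\mathbf{M}_{ij}=-(-\mathbf{B})_{ij}\geq 0$ for $i\neq j$. Positive definiteness of $-\mathbf{B}$ forces every eigenvalue of $\mathbf{M}=c\mathbf{I}-(-\mathbf{B})$ to be strictly smaller than $c$; as $\mathbf{M}$ is symmetric and nonnegative, its spectral radius equals its largest eigenvalue, so $\rho(\mathbf{M})<c$. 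The Neumann series therefore converges and
\[
\mathbf{S}=(-\mathbf{B})^{-1}=(c\mathbf{I}-\mathbf{M})^{-1}=\frac{1}{c}\sum_{k=0}^{\infty}\left(\frac{\mathbf{M}}{c}\right)^{\!k}\ \geq\ \mathbf{0}
\]
entrywise, since each summand is a nonnegative matrix. Hence $s_{ij}\geq 0$ for all $i,j\in\mathcal{N}$. To upgrade the diagonal to a strict inequality I would simply use that the inverse of a symmetric positive definite matrix is again symmetric positive definite, so $s_{ii}=\mathbf{e}_i'\mathbf{S}\mathbf{e}_i>0$ for every $i\in\mathcal{N}$.

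An alternative, more combinatorial route to the nonnegativity is the all-minors form of the matrix-tree theorem \cite{chaiken1982combinatorial}: each entry of $\mathbf{S}=-\mathbf{B}^{-1}$ can be written as a ratio whose denominator is $\det(-\mathbf{B})>0$ and whose numerator is a sum, over suitable spanning forests of the grid, of products of the (nonnegative) line susceptances; when the grid graph is connected at least one such forest exists for each diagonal entry, again giving $s_{ii}>0$. I would present the M-matrix/Neumann-series argument as the proof and keep the forest interpretation as a remark.

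The routine parts here are symmetry and diagonal positivity, each essentially one line. The only genuinely load-bearing step is the off-diagonal nonnegativity, and within it the fact that carries the weight is the positive definiteness of $-\mathbf{B}$ (equivalently, invertibility of the reduced Laplacian, i.e. connectivity of the grid graph) — exactly what Remark~1 supplies. So the argument needs nothing beyond standard Perron--Frobenius theory and a convergent operator series; the main ``obstacle'' is recognizing the grounded-Laplacian / Stieltjes-matrix structure and invoking it correctly.
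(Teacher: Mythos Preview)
Your proposal is correct and follows essentially the same route as the paper: identify $-\mathbf{B}$ as a symmetric positive definite $M$-matrix, split it as $c\mathbf{I}-\mathbf{M}$ with $c=\max_i(-\mathbf{B})_{ii}$ and $\mathbf{M}\geq 0$, use positive definiteness (via Perron--Frobenius) to get $\rho(\mathbf{M})<c$, and conclude entrywise nonnegativity of $\mathbf{S}$ from the Neumann series. The only cosmetic difference is that the paper reads $s_{ii}>0$ off the $k=0$ term of the series, whereas you invoke positive definiteness of $\mathbf{S}=(-\mathbf{B})^{-1}$; both are one-line observations.
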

 
\begin{proof}
See Appendix \ref{appLemma}.
\end{proof}

Power flow equations \eqref{power_dc} and \eqref{theta} are equivalent, then, problem $\mathrm{OP}_i$ for player $i$ can be simplified as follows.

\begin{proposition}
The optimization problem $\mathrm{OP}_i$ for player $i$ is equivalent to the following problem:
\begin{align*}
\mathrm{OP}_i': \quad\min\limits_{P_i^g,\theta_i}\quad &U_i(P_i^g,\theta_i)\\
\mathrm{s.t.}\quad &\theta_i=\sum_{j\in \mathcal{N}} s_{ij} P_j,\\
&0\leq P_i^g \leq P_{i,\max}^g,\ i\in\mathcal{N}_d.
\end{align*}
\end{proposition}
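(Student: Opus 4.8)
The plan is to invoke the nonsingularity of $\mathbf{B}$ from Remark 1 to invert the matrix power-flow constraint, and then to notice that only one of the resulting scalar equations touches the variables that actually appear in player $i$'s subproblem. Concretely, since $\mathbf{B}$ is invertible, $\mathbf{P}=-\mathbf{B}\boldsymbol{\theta}$ is equivalent to $\boldsymbol{\theta}=-\mathbf{B}^{-1}\mathbf{P}=\mathbf{S}\mathbf{P}$, i.e. to the $N$ scalar identities $\theta_k=\sum_{j\in\mathcal{N}}s_{kj}P_j$, $k\in\mathcal{N}$, by \eqref{theta}. Thus $\mathrm{OP}_i$ is, without any loss, the problem of minimizing $U_i(P_i^g,\theta_i)$ subject to this whole system of identities together with $0\leq P_i^g\leq P_{i,\max}^g$.

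Next I would argue that all the identities with index $k\neq i$ are redundant for this particular minimization. In $\mathrm{OP}_i$ the objective $U_i(P_i^g,\theta_i)$ and the box constraint involve only the pair $(P_i^g,\theta_i)$, while the injections $P_j$ with $j\neq i$ (equivalently, the other microgrids' and the generators' decisions, via \eqref{injection}) are held fixed when player $i$ optimizes. Among the identities from the previous step, exactly the $i$-th one, $\theta_i=\sum_{j\in\mathcal{N}}s_{ij}P_j$, constrains $\theta_i$; each identity with $k\neq i$ merely expresses $\theta_k$ as an affine function of $\mathbf{P}$, and $\theta_k$ occurs nowhere else in $\mathrm{OP}_i$. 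Hence the projection of the feasible set of $\mathrm{OP}_i$ onto the $(P_i^g,\theta_i)$-coordinates coincides with the feasible set of $\mathrm{OP}_i'$: the inclusion ``$\subseteq$'' is clear because $\mathrm{OP}_i$ still carries the $i$-th identity and the same box constraint; ``$\supseteq$'' holds because, given any $(P_i^g,\theta_i)$ feasible for $\mathrm{OP}_i'$, setting $\theta_k:=\sum_{j\in\mathcal{N}}s_{kj}P_j$ for every $k\neq i$ yields a point satisfying all constraints of $\mathrm{OP}_i$.

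Finally I would conclude: since the objective of $\mathrm{OP}_i$ depends only on $(P_i^g,\theta_i)$ and is identical to that of $\mathrm{OP}_i'$, and the two feasible sets agree in these variables, the two problems share the same optimal value and the same optimal $(P_i^{g},\theta_i)$, which is exactly the asserted equivalence. The only delicate point — more a matter of careful bookkeeping than a genuine obstacle — is being explicit that in player $i$'s best-response problem the other injections $P_j$, $j\neq i$, are parameters, so that the eliminated angles $\theta_k$, $k\neq i$, are genuinely free apart from their defining identities and may be discarded without changing the projected feasible set. Note that Lemma~\ref{lemma_Smatrix} plays no role here; what is used is only the invertibility of $\mathbf{B}$ (Remark 1) and the definition $\mathbf{S}=-\mathbf{B}^{-1}$ from \eqref{theta}.
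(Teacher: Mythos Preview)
Your argument is correct and follows the same route as the paper's own proof: invert $\mathbf{B}$ via Remark~1 to replace $\mathbf{P}=-\mathbf{B}\boldsymbol{\theta}$ by $\boldsymbol{\theta}=\mathbf{S}\mathbf{P}$, then observe that only the $i$-th row $\theta_i=\sum_{j\in\mathcal{N}}s_{ij}P_j$ actually constrains the decision variables $(P_i^g,\theta_i)$ appearing in $U_i$. Your treatment is in fact more careful than the paper's, which compresses the whole argument into two sentences; your explicit projection-of-feasible-sets justification (the ``$\subseteq$'' and ``$\supseteq$'' inclusions) makes precise what the paper merely asserts.
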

\begin{proof}
The decision variables in the objective function \eqref{utility} for player $i$ are $P_i^g$ and $\theta_i$. In addition, equation $\theta_i=\sum_{j\in \mathcal{N}} s_{ij} P_j$ includes all the information related to the decision variables in the original power flow constraint $\mathrm{\textbf{P}}=-\mathrm{\textbf{B}}\boldsymbol{\theta}$, and thus $\mathrm{OP}_i$ and $\mathrm{OP}_i'$ are equivalent for $\forall i\in\mathcal{N}_d$.
\end{proof}

Next, by plugging constraint $\theta_i=\sum_{j\in \mathcal{N}} s_{ij} P_j$ into the objective function, we obtain
\begin{equation}\label{utility2}
\widetilde{U}_i(P_i^g,P_{-i}^g)=\psi_i P_i^g+\zeta(P_i^l-P_i^g)+\frac{1}{2}\eta_i^2 (\sum_{j\in \mathcal{N}} s_{ij} P_j)^2,
\end{equation}
for $i\in \mathcal{N}_d$, where $\widetilde{U}_i:\mathcal{P}_i^{\mathcal{G}} \times \mathcal{P}_{-i}^{\mathcal{G}} \rightarrow \mathbb{R}$, and it is strictly convex over $P_i^g$. Note that after simplification, the objective function \eqref{utility2} is only related to the amount of power generation. Then, we can define a new game $\widetilde{G}:=\big\{\mathcal{N}_d,{\{ \mathcal{P}^{\widetilde{\mathcal{G}}}_i\} }_{i\in \mathcal{N}_d},{\{ \widetilde{U}_i \}}_{i\in \mathcal{N}_d}, \mathcal{P} \big\}$ which is equivalent to game $G$. Note that the action set $\mathcal{P}^{\widetilde{\mathcal{G}}}_i$ of player $i$, $i\in \mathcal{N}_d$, and the feasible set $\mathcal{P}^{\widetilde{\mathcal{G}}}_F$ of game $\widetilde{G}$ is the same as the one in game $G$, respectively.
In addition, power flow equations are not in the constraints of $\widetilde G$ anymore. Then, game $\widetilde G$ can be categorized to a generalized Nash equilibrium problem \cite{rosen1965existence,facchinei2010generalized}. Instead of directly applying the general results from \cite{rosen1965existence,facchinei2010generalized}, we analyze the renewable generation game by considering its unique characteristics to obtain more insights. The definition of Nash equilibrium of game $\widetilde G$ is as follows.

\textit{\textbf{Definition 2} (Nash Equilibrium of Game $\widetilde G$)}: The set of renewable generation profile $\mathbf{P}_d^{g*}$ constitutes a \textit{Nash equilibrium} point for the microgrids in the non-cooperative game $\widetilde{G}$ if 
$$\widetilde{U}_i(P_i^{g*},P_{-i}^{g*})\leq \widetilde{U}_i(P_i^g,P_{-i}^{g*}),\ \forall P_i^g\in \mathcal{P}^{\widetilde{\mathcal{G}}}_i,\ i\in \mathcal{N}_d.$$ 

Note that for a given $\mathbf{P}_d^{g}$, there exists a unique corresponding voltage angle profile $\boldsymbol{\theta}_d$. Thus, there is a one-to-one correspondence between $\mathbf{P}_d^{g}$ and $\boldsymbol{\theta}_d$. Therefore, the Nash equilibrium solution to game $G$, $(\mathbf{P}_d^{g*},\boldsymbol{\theta}_d^*)$, is strategically equivalent to the NE solution to game $\widetilde G$, $\mathbf{P}_d^{g*}$.
Then, by using the first-order optimality condition to \eqref{utility2}, we obtain
$
\psi_i-\zeta+\eta_i^2(\sum_{j\in \mathcal{N}} s_{ij} P_j)s_{ii}=0,\ \ i\in \mathcal{N}_d.
$
By defining $g_i:=s_{ii}P_i$, $\bar{g}_{-i}:=\sum_{j\neq i\in \mathcal{N}}s_{ij}P_j$ and since $s_{ii}\neq 0,\ i\in \mathcal{N}_d$, we have
$g_i=\frac{\zeta-\psi_i}{\eta_i^2 s_{ii}} - \bar{g}_{-i},\ \ i\in \mathcal{N}_d,$
which is equivalent to
\begin{equation}\label{userpower}
P_i=\frac{1}{s_{ii}}(\frac{\zeta-\psi_i}{\eta_i^2 s_{ii}} - \bar{g}_{-i}),\ \ i\in \mathcal{N}_d.
\end{equation}

For convenience, define a player specific parameter $\gamma_i$ for $i$th player as 
$\gamma_i:=\frac{\zeta-\psi_i}{\eta_i^2 s_{ii}},$
and denote $P_i^{\max}=P_{i,\max}^g-P_i^l$. To find the equilibrium solution, we express the set of fixed point equations \eqref{userpower} in a matrix form
\begin{equation*}
\begin{bmatrix}
1&\frac{s_{12}}{s_{11}}&\frac{s_{13}}{s_{11}}& \dotsm &\frac{s_{1N_d}}{s_{11}}\\[6pt]
\frac{s_{21}}{s_{22}}&1&\frac{s_{23}}{s_{22}}& \dotsm &\frac{s_{2N_d}}{s_{22}}\\[6pt]
\vdots & \vdots & \vdots & \ddots & \vdots\\[6pt]
\frac{s_{N_d 1}}{s_{N_d N_d}}& \frac{s_{N_d 2}}{s_{N_d N_d}} & \dotsm & \frac{s_{N_d N_{d}-1}}{s_{N_d N_d}} & 1
\end{bmatrix}
\begin{bmatrix}
P_1^*\\[6pt]
P_2^*\\[6pt]
\vdots\\[6pt]
P_{N_d}^*
\end{bmatrix}=
\begin{bmatrix}
q_1\\[6pt]
q_2\\[6pt]
\vdots\\[6pt]
q_{N_d}
\end{bmatrix}
\end{equation*}
\begin{flalign}\label{matrixform}
&\Leftrightarrow \quad \textbf{H}\textbf{P}_d^{*}=\textbf{q},&
\end{flalign}
where $\textbf{H}:=[ \frac{s_{ij}}{s_{ii}} ]_{i,j\in\mathcal{N}_d} ,\ \textbf{q} := [q_i]_{i\in \mathcal{N}_d} = [ \frac{\gamma_i}{s_{ii}} -\sum_{j\in \mathcal{N}_g} s_{ij}P_j]_{i\in \mathcal{N}_d} $, and $\textbf{P}_d^{*}:= [P_i^*]_{i\in \mathcal{N}_d}=[P_i^{g*}-P_i^l]_{i\in \mathcal{N}_d}$.

On one hand, if the solution to \eqref{matrixform} is an inner point that satisfies $0\leq P_i^g \leq P_{i,\max}^g,\ \forall i\in \mathcal{N}_d$, then it is a feasible optimal solution. On the other hand, if player $i$'s payoff attains its minimum at a generation level out of the feasible interval, then, the optimal solution will be achieved at the boundary point. Specifically, $P_i$ has the following form:
\begin{align}\label{netpower}
P_i=\begin{cases}
\begin{array}{ll}
-P_i^l, &\mathrm{if}\ \gamma_i \leq \bar{g}_{-i}-s_{ii}P_i^l,\\
P_i^{\max}, &\mathrm{if}\ \gamma_i \geq \bar{g}_{-i}+s_{ii}P_i^{\max},\\
\frac{1}{s_{ii}}(\gamma_i - \bar{g}_{-i}), &\mathrm{otherwise}.
\end{array}
\end{cases}
\end{align}
The above results are summarized in Theorem \ref{uniqueNE}.

\begin{theorem}\label{uniqueNE}
The renewable energy generation game $\widetilde G$ admits a unique Nash equilibrium, and the net power injection of player $i$ to the grid is given by \eqref{netpower}.
\end{theorem}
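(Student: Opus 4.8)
The plan is to show that the reduced game $\widetilde G$ with payoffs \eqref{utility2} is a (weighted) \emph{potential game} whose potential is a strictly convex quadratic on the product of the box constraints; this delivers existence and uniqueness of the Nash equilibrium at once, and the one-dimensional structure of each player's best-response problem then yields the closed form \eqref{netpower}. First I would collect the ingredients already available: by Lemma \ref{lemma_Smatrix} we have $s_{ii}>0$, so each $\widetilde U_i(\cdot,P_{-i}^g)$ is a strictly convex quadratic in $P_i^g$ (leading coefficient $\tfrac12\eta_i^2 s_{ii}^2>0$), and the action set $\mathcal{P}_i^{\widetilde{\mathcal{G}}}=[0,P_{i,\max}^g]$ is nonempty, compact and convex. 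Introduce the positive weights $w_i:=1/(\eta_i^2 s_{ii})$ and, writing the injection variables as $P_i=P_i^g-P_i^l$, the function
\[
\Pi(\mathbf{P}_d)=\sum_{i\in\mathcal{N}_d} w_i(\psi_i-\zeta)P_i+\tfrac12\sum_{i\in\mathcal{N}_d}\sum_{j\in\mathcal{N}_d}s_{ij}P_iP_j+\sum_{i\in\mathcal{N}_d}\Big(\sum_{j\in\mathcal{N}_g}s_{ij}P_j\Big)P_i .
\]
Using the symmetry of $\mathbf{S}$ (Lemma \ref{lemma_Smatrix}), a direct differentiation shows $\partial_{P_i^g}\Pi=w_i\,\partial_{P_i^g}\widetilde U_i$ for every $i\in\mathcal{N}_d$ (both equal $w_i(\psi_i-\zeta)+\sum_{j\in\mathcal{N}}s_{ij}P_j$). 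Hence $\widetilde G$ is a weighted potential game with potential $\Pi$, and since the $w_i$ are positive and the joint action set is a Cartesian product, the Nash equilibria of $\widetilde G$ (Definition 2) are exactly the points at which no single coordinate of $\Pi$ can be decreased unilaterally.

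Next I would establish strict convexity of $\Pi$ on the box: its Hessian in $(P_i)_{i\in\mathcal{N}_d}$ is the principal submatrix $[s_{ij}]_{i,j\in\mathcal{N}_d}$ of $\mathbf{S}=-\mathbf{B}^{-1}$. By Remark 1, $-\mathbf{B}$ is a symmetric, nonsingular reduced Laplacian, hence positive definite; therefore $\mathbf{S}\succ 0$, and so is each of its principal submatrices, so $\Pi$ is strictly convex and continuous on the nonempty compact convex set $\prod_{i\in\mathcal{N}_d}[0,P_{i,\max}^g]$. A strictly convex continuous function on such a set has a unique minimizer, which (the set being a product, the $w_i$ positive) is a Nash equilibrium of $\widetilde G$ — giving existence — and conversely every Nash equilibrium is a coordinate-wise minimizer of the smooth convex $\Pi$, which for a box constraint coincides with satisfying the necessary and sufficient first-order/KKT conditions of $\min_{\mathbf{P}_d}\Pi$; strict convexity makes this point unique. (For the interior case this specializes to the claim that the matrix $\mathbf{H}=[s_{ij}/s_{ii}]$ in \eqref{matrixform} is nonsingular, being $\mathrm{diag}(1/s_{ii})$ times the positive definite matrix $[s_{ij}]_{i,j\in\mathcal{N}_d}$.)

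Finally, to obtain \eqref{netpower}, I would note that at the Nash equilibrium each player solves the one-dimensional strictly convex program $\min_{P_i^g\in[0,P_{i,\max}^g]}\widetilde U_i(P_i^g,P_{-i}^{g*})$, whose minimizer is the projection onto $[-P_i^l,P_i^{\max}]$ of the unconstrained stationary point obtained from the first-order condition $\psi_i-\zeta+\eta_i^2 s_{ii}\theta_i=0$, i.e.\ of $P_i=(\gamma_i-\bar{g}_{-i})/s_{ii}$. Comparing the two threshold conditions for that projection against the definitions of $\gamma_i$, $\bar{g}_{-i}$ and $P_i^{\max}=P_{i,\max}^g-P_i^l$ reproduces exactly the three cases in \eqref{netpower}, and the corresponding generation profile is $P_i^{g*}=P_i+P_i^l$.

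The step I expect to be the crux is spotting the diagonal rescaling $w_i=1/(\eta_i^2 s_{ii})$ that turns $\widetilde G$ into a potential game and then transferring positive definiteness from the reduced Laplacian $-\mathbf{B}$ to the $\mathcal{N}_d$-principal block of $\mathbf{S}=-\mathbf{B}^{-1}$; once the strictly convex potential is in hand, existence, uniqueness and the projection formula \eqref{netpower} are routine.
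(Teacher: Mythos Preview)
Your proof is correct and takes a genuinely different route from the paper's. The paper argues directly: for interior equilibria it shows the coefficient matrix $\mathbf{H}$ in \eqref{matrixform} is nonsingular (via Sylvester's criterion applied to the principal $N_d\times N_d$ block of $\mathbf{S}$ and the factorization $|\mathbf{S}_1|=|\mathbf{H}|\prod_i s_{ii}$), and for boundary equilibria it fixes a partition of players into interior/zero/maximum, shows the reduced linear system is again nonsingular, and then rules out alternative active-set patterns case by case. Your weighted-potential-game argument, with the rescaling $w_i=1/(\eta_i^2 s_{ii})$ and the observation that the Hessian of $\Pi$ is the positive-definite $\mathcal{N}_d$-principal block of $\mathbf{S}=-\mathbf{B}^{-1}$, is more unified: existence, uniqueness, and the treatment of every possible active-set configuration all follow at once from the strict convexity of $\Pi$ on the box together with the standard fact that, for a smooth convex function on a product of intervals, coordinate-wise minima coincide with the global minimum. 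The price is that you invoke a bit more structure (potential games, convex KKT on product sets), whereas the paper's proof is more hands-on; the benefit is a cleaner and more robust uniqueness argument that avoids the somewhat informal case elimination in the paper's boundary analysis. Your remark that $\mathbf{H}=\mathrm{diag}(1/s_{ii})[s_{ij}]_{i,j\in\mathcal{N}_d}$ is nonsingular recovers the paper's interior conclusion as a special case.
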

\begin{proof}
See Appendix \ref{app1}.
\end{proof}

Next, we analyze the connection between the team optimal solution and Nash equilibrium solution as follows.

\begin{corollary}
 The optimal solution to the team problem is identical with the game solution, and thus $\mathrm{LOE}=1$.
\end{corollary}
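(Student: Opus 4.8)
The plan is to reduce both the team problem $\mathrm{TP}$ and the game $\widetilde G$ to the common decision variable $\mathbf P_d=[P_i]_{i\in\mathcal N_d}$ (with $P_i=P_i^g-P_i^l$) and then to compare their first-order optimality characterizations. First I would substitute $\theta_i=\sum_{j\in\mathcal N}s_{ij}P_j$ into the objective of $\mathrm{TP}$, exactly as was done to obtain \eqref{utility2}, turning $\mathrm{TP}$ into the minimization of $\sum_{i\in\mathcal N_d}\alpha_i\widetilde U_i(P_i^g,P_{-i}^g)$ over the box $0\le P_i^g\le P^g_{i,\max}$ only. Because each $\widetilde U_i$ is strictly convex in $P_i^g$ and every $\alpha_i>0$, the summed objective is strictly convex on the compact box, so $\mathrm{TP}$ has a unique minimizer $(\mathbf P_d^{go},\boldsymbol\theta_d^o)$ fully characterized by its KKT system; the companion characterization for the NE of $\widetilde G$ is already in hand as \eqref{netpower} via Theorem \ref{uniqueNE}, and the one-to-one map $\mathbf P_d^g\leftrightarrow\boldsymbol\theta_d$ noted after Definition~2 lets me work entirely in $\mathbf P_d$.

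Next I would write the stationarity condition of $\mathrm{TP}$ in the $P_k^g$ direction for each $k\in\mathcal N_d$, namely $\alpha_k(\psi_k-\zeta)+\sum_{i\in\mathcal N_d}\alpha_i\eta_i^2\big(\sum_{j\in\mathcal N}s_{ij}P_j\big)s_{ik}+\mu_k^{+}-\mu_k^{-}=0$ together with the complementary-slackness relations for the two box constraints, and try to show this system is equivalent, component by component, to \eqref{userpower}--\eqref{netpower}: the interior case should collapse to $\mathbf H\mathbf P_d=\mathbf q$ and the two active-bound cases to the first two branches of \eqref{netpower}. Once the $\mathrm{TP}$ and $\widetilde G$ optimality systems are seen to coincide, strict convexity of $\mathrm{TP}$ and uniqueness from Theorem \ref{uniqueNE} force $(\mathbf P_d^{go},\boldsymbol\theta_d^o)=(\mathbf P_d^{g*},\boldsymbol\theta_d^*)$, and plugging this common point into the definition of $\mathrm{LOE}$ makes numerator equal denominator, giving $\mathrm{LOE}=1$.

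The step I expect to be the main obstacle is exactly this matching of stationarity conditions. The per-player NE condition for $k$ involves only the diagonal term $\eta_k^2 s_{kk}\theta_k$, whereas the gradient of the summed objective in the $P_k^g$ direction additionally collects the off-diagonal contributions $\sum_{i\ne k}\alpha_i\eta_i^2 s_{ik}\theta_i$ coming from the voltage-angle penalties of the other microgrids, since perturbing $P_k^g$ perturbs every $\theta_i$ with $s_{ik}\ne 0$. So the crux is to argue that at the candidate solution these cross terms do not shift the optimizer --- which seems to need a genuine structural fact about $\mathbf S=-\mathbf B^{-1}$ beyond what Lemma \ref{lemma_Smatrix} supplies (for instance that the $\mathcal N_d\times\mathcal N_d$ block of $\mathbf S$ decouples the microgrid buses), or else a reweighting of the $\alpha_i$ that absorbs them; handling this cleanly, rather than the routine convexity/KKT bookkeeping and boundary-case analysis, is where the real work of the proof lies.
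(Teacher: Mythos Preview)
Your route (reduce to $\mathbf P_d$, write the KKT system for $\mathrm{TP}$, and match it against \eqref{userpower}--\eqref{netpower}) is not what the paper does. The paper argues in two lines that the team optimum is itself a Nash equilibrium: from optimality of $\sum_i\alpha_iU_i$ it writes, for each $t\in\mathcal N_d$,
\[
\sum_{i\in\mathcal N_d}\alpha_iU_i(P_i^{go},\theta_i^o)\;\le\;\sum_{j\ne t}\alpha_jU_j(P_j^{go},\theta_j^o)+\alpha_tU_t(P_t^g,\theta_t),
\]
cancels the $j\ne t$ terms to get $U_t(P_t^{go},\theta_t^o)\le U_t(P_t^g,\theta_t)$, observes that $(\mathbf P_d^{go},\boldsymbol\theta_d^o)$ lies in the projected constraint set of Definition~1, and then invokes the uniqueness of the NE from Theorem~\ref{uniqueNE} to conclude $(\mathbf P_d^{go},\boldsymbol\theta_d^o)=(\mathbf P_d^{g*},\boldsymbol\theta_d^*)$.

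The obstacle you isolate is real, and it is precisely the step the paper's argument elides. After substituting $\theta_j=\sum_{k}s_{jk}P_k$, a unilateral change in $P_t^g$ moves every $\theta_j$ with $s_{jt}>0$ and hence every $U_j(P_j^{go},\theta_j)$; one therefore may not freeze the $j\ne t$ summands at $U_j(P_j^{go},\theta_j^o)$ on the right-hand side while letting $(P_t^g,\theta_t)$ vary over $\Phi_t$, so the cancellation that produces $U_t(P_t^{go},\theta_t^o)\le U_t(P_t^g,\theta_t)$ is not delivered by team optimality alone. In your KKT language this is exactly the appearance of the off-diagonal block $\sum_{i\ne k}\alpha_i\eta_i^2 s_{ik}\theta_i$ in $\partial\big(\sum_i\alpha_i\widetilde U_i\big)/\partial P_k^g$, which the per-player condition \eqref{userpower} does not contain, and nothing in Lemma~\ref{lemma_Smatrix} (symmetry and nonnegativity of $\mathbf S$) forces those cross terms to vanish for generic $\alpha_i$, $\eta_i$ and a nondiagonal $\mathcal N_d\times\mathcal N_d$ block of $\mathbf S$. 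So the difficulty you flagged is not a matter of bookkeeping: it is a genuine gap, and the paper's shorter route does not close it either; a valid argument would need an additional structural hypothesis (for instance $s_{ik}=0$ for $i\ne k$ in $\mathcal N_d$, or a potential-game relation tying the weights $\alpha_i\eta_i^2$ to $\mathbf S$) that neither your proposal nor the paper supplies.
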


\begin{proof}
Since $(\mathbf{P}_d^{go},\boldsymbol{\theta}_d^o)$ is an optimal solution to the team problem, then, for $ \forall t\in \mathcal{N}_d$,
$$ \sum_{i\in\mathcal{N}_d} \alpha_i{U}_i(P_i^{go},\theta_i^o)\leq   \sum_{j\neq t \in\mathcal{N}_d} \alpha_j {U}_j(P_j^{go},\theta_j^o)+ \alpha_t U_t(P_t^g,\theta_t),$$
where $P_t^g\in\mathcal{P}^{\mathcal{G}}_t,\ \theta_t\in \Theta_t$. Therefore,
$\alpha_t {U}_t(P_t^{go},\theta_t^o)\leq \alpha_t {U}_t(P_t^{g},\theta_t)$, and $(P_t^{go},\theta_t^o)$ is a solution to $\mathrm{OP}_i$, $\forall t\in\mathcal{N}_d$.
 Note that $(\mathbf{P}_d^{go},\boldsymbol{\theta}_d^o)$ satisfies the power flow constraints, and thus it is in the projected constraint set in Definition 1.  Then, it directly follows that $(\mathbf{P}_d^{go},\boldsymbol{\theta}_d^o)$ constitutes a NE of game $G$. Since game $G$ is equivalent to game $\widetilde G$, $\mathbf{P}_d^{go}$ is a NE of game $\widetilde G$. Based on Theorem \ref{uniqueNE}, we obtain $\mathbf{P}_d^{go}=\mathbf{P}_d^{g*}$. Therefore, $(\mathbf{P}_d^{go},\boldsymbol{\theta}_d^o)=(\mathbf{P}_d^{g*},\boldsymbol{\theta}_d^*)$ and $\mathrm{LOE}=1$.
\end{proof}

\textit{\textbf{Remark 3}}: From Corollary 1, we know that the solution obtained via the decentralized decision making is as efficient as that obtained by the centralized control of microgrids.

\section{Update Schemes for the Generation Game}\label{schemes}
We have analyzed the existence and uniqueness of the Nash equilibrium of the renewable energy generation game in Section \ref{Nash}. In this section, we first present iterative update algorithm (IUA) and random update algorithm (RUA) which are based on the gradient-descent method \cite{nocedal2006numerical}, and then design a PMU-enabled distributed algorithm to compute the NE solution of the game. The iterative algorithms enable the adaptivity and resilience of the grid in response to disruptive events. We will show their convergence to NE solutions and study their rates of convergence.

\subsection{Iterative Update Algorithm}
The iterative update algorithm is a scheme that each player updates their amount of power generation simultaneously at time step $n$ which is given by
\begin{align}
P_i^{(n+1)}&=\Psi_i(\gamma_i,\bar g_{-i}^{(n)}) \notag\\
&=\min\Big( P_i^{\max},\ \max\big[-P_i^l,\frac{1}{s_{ii}}(\gamma_i - \bar{g}_{-i}^{(n)})\big]\Big) \notag\\
&=\min\Big( P_i^{\max},\ \max \big[ -P_i^l,\notag\\
&\qquad\qquad \frac{1}{s_{ii}}(\gamma_i - \sum\limits_{j\in \mathcal{N}_g}s_{ij}P_j 
-\sum\limits_{j\neq i\in \mathcal{N}_d}s_{ij}P_j^{(n)} )\big]\Big).\label{iterativeupdate}
\end{align}

The IUA converges to the unique Nash equilibrium
$
P_i^{*}=\min\Big( P_i^{\max},\ \max \big[ -P_i^l,\frac{1}{s_{ii}}(\gamma_i - \sum_{j\in \mathcal{N}_g}s_{ij}P_j
-\sum_{j\neq i\in \mathcal{N}_d}s_{ij}P_j^{*}) \big]\Big),
$
which is equivalent to
\begin{align*}
P_i^{g*}=\min\Big( P_{i,\max}^{g},\ \max \big[ 0,\frac{1}{s_{ii}}(&\gamma_i - \sum\limits_{j\in \mathcal{N}_g}s_{ij}P_j\\
&-\sum\limits_{j\neq i\in \mathcal{N}_d}s_{ij}P_j^{*} )+ P_i^l \big]\Big)
\end{align*}
from any feasible initial point of $P_i^g,\ \forall i\in \mathcal{N}_d$. A sufficient condition that ensures the global stability and convergence of IUA is summarized in Theorem \ref{stability}. Note that the IUA algorithm implicitly handles the local capacity constraints and the coupled network constraints in $\mathrm{OP}_i$. Hence, each iteration yields a feasible solution of $\mathrm{OP}_i$.

\begin{theorem}\label{stability}
The iterative update algorithm is stable and converges to the unique Nash equilibrium if the following condition holds
\begin{align}
\max_{i,j\neq i\in \mathcal{N}_d} \frac{s_{ij}}{s_{ii}}(N_d-1)<1.\label{iterativetheom}
\end{align}
\end{theorem}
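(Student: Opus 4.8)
The plan is to show that the iterative update map $\Psi = (\Psi_1,\dots,\Psi_{N_d})$ defined in \eqref{iterativeupdate} is a contraction with respect to a suitable norm on $\mathbb{R}^{N_d}$, and then invoke the Banach fixed-point theorem to conclude global convergence to the unique fixed point, which by Theorem \ref{uniqueNE} is the Nash equilibrium. The key observation is that each coordinate map $\Psi_i$ is the composition of an affine map in the variables $(P_j)_{j\neq i}$ with the projection $z\mapsto \min(P_i^{\max},\max(-P_i^l,z))$ onto the interval $[-P_i^l,P_i^{\max}]$. Since projection onto a closed interval is nonexpansive (1-Lipschitz), it suffices to control the affine part: $\Psi_i$ depends on $P_j$, $j\neq i\in\mathcal{N}_d$, only through the term $-\frac{1}{s_{ii}}\sum_{j\neq i\in\mathcal{N}_d} s_{ij}P_j$ (the terms involving $P_j$, $j\in\mathcal{N}_g$, and $\gamma_i$ are constants during the iteration).

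First I would fix two feasible profiles $\mathbf{P}$ and $\mathbf{P}'$ and estimate $|\Psi_i(\mathbf{P}) - \Psi_i(\mathbf{P}')|$. Using nonexpansiveness of the interval projection,
\begin{equation*}
|\Psi_i(\mathbf{P}) - \Psi_i(\mathbf{P}')| \le \frac{1}{s_{ii}}\Big|\sum_{j\neq i\in\mathcal{N}_d} s_{ij}(P_j - P_j')\Big| \le \sum_{j\neq i\in\mathcal{N}_d} \frac{s_{ij}}{s_{ii}}\,|P_j - P_j'|,
\end{equation*}
where I also used $s_{ij}\ge 0$ and $s_{ii}>0$ from Lemma \ref{lemma_Smatrix}. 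Taking the max over $i$ and bounding each ratio $s_{ij}/s_{ii}$ by $\kappa := \max_{i,\,j\neq i\in\mathcal{N}_d} s_{ij}/s_{ii}$, there are $N_d-1$ off-diagonal terms, so
\begin{equation*}
\|\Psi(\mathbf{P}) - \Psi(\mathbf{P}')\|_\infty \le \kappa\,(N_d-1)\,\|\mathbf{P} - \mathbf{P}'\|_\infty.
\end{equation*}
Condition \eqref{iterativetheom} is exactly $\kappa(N_d-1)<1$, so $\Psi$ is a contraction in the $\ell_\infty$-norm on the (closed, convex, nonempty) feasible box $\prod_{i\in\mathcal{N}_d}[-P_i^l,P_i^{\max}]$, which $\Psi$ maps into itself by construction. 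Banach's theorem then gives a unique fixed point and geometric convergence from any feasible initial point, with rate $\kappa(N_d-1)$; this fixed point satisfies \eqref{netpower}, hence is the NE of Theorem \ref{uniqueNE}.

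The main obstacle — really the only subtlety — is making sure the projection/clipping does not spoil the contraction estimate and that one argues at the level of the map $\Psi$ rather than the interior first-order condition \eqref{userpower}: the bound must use the nonexpansiveness of $z\mapsto\min(P_i^{\max},\max(-P_i^l,z))$ uniformly, including when one or both profiles land on a boundary face. A secondary point worth stating explicitly is that the $\gamma_i$ and the $\mathcal{N}_g$-terms are genuinely constant across iterations (they depend only on fixed generator outputs and parameters), so they drop out of the difference; and one should note the feasibility of each iterate, i.e.\ that $\Psi$ indeed maps the feasible box into itself, which is immediate from the outer $\min/\max$. With these remarks the argument is essentially the three displays above.
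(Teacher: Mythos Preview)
Your proposal is correct and follows the same overall strategy as the paper: establish an $\ell_\infty$-contraction with constant $\kappa(N_d-1)$ and conclude convergence to the unique fixed point identified in Theorem~\ref{uniqueNE}.

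The execution differs in one respect worth noting. The paper's proof works with the error $\Delta P_i^{(n)}:=P_i^{(n)}-P_i^*$ relative to the equilibrium and carries out an explicit case analysis according to (i) whether $P_i^*$ is interior, at $-P_i^l$, or at $P_i^{\max}$, and (ii) which branch of the $\min/\max$ the iterate $P_i^{(n+1)}$ falls into; each case is checked to satisfy $|\Delta P_i^{(n+1)}|\le \sum_{j\neq i}\frac{s_{ij}}{s_{ii}}|\Delta P_j^{(n)}|$. Your argument replaces all of this case work by the single observation that the interval projection $z\mapsto\min(P_i^{\max},\max(-P_i^l,z))$ is $1$-Lipschitz, and bounds $|\Psi_i(\mathbf{P})-\Psi_i(\mathbf{P}')|$ for \emph{any} two profiles. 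This is shorter, handles the boundary cases uniformly, and yields the contraction property of the map itself (hence uniqueness of the fixed point directly via Banach), whereas the paper first invokes Theorem~\ref{uniqueNE} for uniqueness and then shows convergence. Both routes land on the same inequality and the same sufficient condition~\eqref{iterativetheom}.
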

\begin{proof}
See Appendix \ref{app2}.
\end{proof}

The IUA requires that microgrids update their actions synchronously. In cases where synchronization is not possible, we develop a generalized mechanism in the next subsection to capture the asynchronous and random updates.

\subsection{Random Update Algorithm}
When no synchronization mechanism exists between players, one more practical update scheme is random update algorithm. More specifically, the players update their generations of the renewable energy in the discrete time intervals with a predefined probability $0<\tau_i<1,\ i\in\mathcal{N}_d$. The random update algorithm is 
\begin{equation}\label{randomupdate}
P_i^{(n+1)}= \begin{cases}
\begin{array}{ll}
{\Psi_i(\gamma_i,\bar g_{-i}^{(n)})}, &\mathrm{with\ probability}\ \tau_i,\\
P_i^{(n)}, &\mathrm{with\ probability}\ 1-\tau_i,
\end{array}
\end{cases}
\end{equation}
where $\Psi_i$ is defined in \eqref{iterativeupdate}.

For the stability and convergence of the random update algorithm, we have the following theorem.
\begin{theorem}\label{thm3}
The random update algorithm is globally stable and converges to the unique Nash equilibrium almost surely if the following condition holds
\begin{align}
\bar{\tau}\cdot \max_{i,j\neq i\in \mathcal{N}_d} \frac{s_{ij}}{s_{ii}}(N_d-1)<\underline{\tau},\label{randomthem}
\end{align}
where $\bar{\tau}$ and $\underline{\tau}$ are the upper bound and lower bound of the probability $\tau_i$, respectively, for $\forall i\in \mathcal{N}_d$.
\end{theorem}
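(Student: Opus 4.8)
The plan is to show that the $\ell_1$ deviation from the unique equilibrium, $V^{(n)} := \sum_{i\in\mathcal{N}_d}\lvert P_i^{(n)}-P_i^*\rvert$, tends to $0$ almost surely; since Theorem~\ref{uniqueNE} guarantees that the Nash equilibrium $\mathbf{P}_d^{*}$ is unique, this is exactly the claimed convergence, and the one-to-one correspondence between $\mathbf{P}_d^{g}$ and $\boldsymbol\theta_d$ then also delivers $\boldsymbol\theta_d^{(n)}\to\boldsymbol\theta_d^{*}$. First I would record the per-player nonexpansiveness already used in the proof of Theorem~\ref{stability}: because $\Psi_i(\gamma_i,\cdot)$ is the composition of an affine map of slope $-1/s_{ii}$ with the ($1$-Lipschitz) projection onto $[-P_i^l,P_i^{\max}]$, and because $P_i^*$ has the fixed-point form \eqref{netpower}, one has
\[
\lvert \Psi_i(\gamma_i,\bar g_{-i}^{(n)})-P_i^*\rvert \;\le\; \frac{1}{s_{ii}}\Bigl\lvert \sum_{j\neq i\in\mathcal{N}_d}s_{ij}\bigl(P_j^{(n)}-P_j^*\bigr)\Bigr\rvert \;\le\; \sum_{j\neq i\in\mathcal{N}_d}\frac{s_{ij}}{s_{ii}}\,\lvert P_j^{(n)}-P_j^*\rvert ,
\]
where the last inequality uses $s_{ij}\ge 0$ and $s_{ii}>0$ from Lemma~\ref{lemma_Smatrix}. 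Each RUA iterate also stays in $[-P_i^l,P_i^{\max}]$ (equivalently $P_i^{g}\in[0,P_{i,\max}^g]$), so feasibility is preserved exactly as for the IUA.

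Next I would set up the stochastic recursion. Let $\mathcal{H}_n$ be the $\sigma$-algebra generated by $\mathbf{P}_d^{(0)},\dots,\mathbf{P}_d^{(n)}$; at step $n$ the update indicators are independent $\mathrm{Bernoulli}(\tau_i)$ variables independent of $\mathcal{H}_n$, so \eqref{randomupdate} gives
\[
\mathbb{E}\bigl[\lvert P_i^{(n+1)}-P_i^*\rvert \,\big|\, \mathcal{H}_n\bigr] \;=\; \tau_i\,\lvert \Psi_i(\gamma_i,\bar g_{-i}^{(n)})-P_i^*\rvert + (1-\tau_i)\,\lvert P_i^{(n)}-P_i^*\rvert .
\]
Summing over $i\in\mathcal{N}_d$, inserting the bound above, replacing $\tau_i$ by $\bar\tau$ and $s_{ij}/s_{ii}$ by $a:=\max_{i,j\neq i\in\mathcal{N}_d}s_{ij}/s_{ii}$ in the first term and $\tau_i$ by $\underline\tau$ in the second, and interchanging the order of the double sum $\sum_i\sum_{j\ne i}$ so that each $\lvert P_j^{(n)}-P_j^*\rvert$ is counted exactly $N_d-1$ times, I obtain
\[
\mathbb{E}\bigl[V^{(n+1)}\,\big|\,\mathcal{H}_n\bigr] \;\le\; \Bigl[\,\bar\tau\, a\,(N_d-1) + 1-\underline\tau\,\Bigr] V^{(n)} \;=:\; \lambda\, V^{(n)} ,
\]
and condition \eqref{randomthem}, namely $\bar\tau\, a\,(N_d-1)<\underline\tau$, is precisely what forces $\lambda\in[0,1)$.

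Finally I would convert this contraction-in-expectation into almost sure convergence. Taking total expectations and iterating yields $\mathbb{E}[V^{(n)}]\le\lambda^{n}V^{(0)}$, hence $\sum_{n\ge 0}\mathbb{E}[V^{(n)}]\le V^{(0)}/(1-\lambda)<\infty$; by Tonelli, $\mathbb{E}\bigl[\sum_n V^{(n)}\bigr]<\infty$, so $\sum_n V^{(n)}<\infty$ almost surely, and therefore $V^{(n)}\to 0$ almost surely, i.e. $P_i^{(n)}\to P_i^{*}$ a.s. for every $i\in\mathcal{N}_d$, which is the assertion. (Alternatively one may note that $\lambda^{-n}V^{(n)}$ is a nonnegative supermartingale, hence a.s. convergent, which again forces $V^{(n)}\to 0$.) The genuinely delicate points are (i) handling the clipping correctly in the nonexpansive inequality — which works because interval projection is $1$-Lipschitz and $P_i^*$ has the same clipped form — and (ii) the passage from $L^1$ decay to the almost-sure statement, where the summability of $\mathbb{E}[V^{(n)}]$ (guaranteed by $\lambda<1$) is the essential ingredient; the combinatorial rearrangement that produces the factor $N_d-1$ is exactly why condition \eqref{randomthem} appears in the stated form.
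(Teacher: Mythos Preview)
Your argument is correct and follows the same overall strategy as the paper: establish a contraction in expectation using the per-player nonexpansiveness inherited from the proof of Theorem~\ref{stability}, then upgrade the $L^1$-type decay to almost-sure convergence via summability. The differences are cosmetic rather than substantive: the paper works with the $\ell_\infty$ quantity $\max_{i}\mathbb{E}\lvert\Delta P_i^{(n)}\rvert$ and closes with Markov's inequality plus Borel--Cantelli, whereas you aggregate in $\ell_1$ via $V^{(n)}=\sum_i\lvert\Delta P_i^{(n)}\rvert$, condition on the natural filtration, and finish with Tonelli (or the supermartingale argument). Both routes yield the same contraction factor $\lambda=\bar\tau\,a(N_d-1)+1-\underline\tau$, so condition~\eqref{randomthem} arises identically; your $\ell_1$ bookkeeping and explicit conditioning on $\mathcal{H}_n$ are arguably a bit cleaner, while the paper's Borel--Cantelli step makes the ``for every $\epsilon$'' structure of a.s.\ convergence more visible.
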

\begin{proof}
See Appendix \ref{app3}.
\end{proof}

\textit{\textbf{Remark 4}}: When all players have the same probability of update, i.e., $\tau_i=\tau,\ \forall i\in\mathcal{N}_d$, then, \eqref{randomthem} is reduced to $\max_{i,j\neq i\in \mathcal{N}_d} \frac{s_{ij}}{s_{ii}}(N_d-1)<1$, which is the same as \eqref{iterativetheom}.

In the IUA and RUA, the players compute their optimal renewable energy generations based on the market electricity price provided by the generator network, and the specific generated power of all other players and generators in the grid. In order to know these information, one possible solution is that all microgrids and generators send their amount of generation to a data center, and then the center broadcasts the received information. Therefore, IUA and RUA actually highly depend on the communication networks which are costly and not confidential. These drawbacks motivate us to design a distributed and convenient update scheme in the ensuing subsection.

\subsection{PMU-enabled Distributed Algorithm}\label{tech_algorithm}
 In our proposed PMU-enabled distributed algorithm (PDA), players do not need to share their private information, e.g., active power injection, and do not need to know the power supplied by the generators. Moreover, PDA does not need the synchronization mechanism as that in IUA since they update their strategies at any time interval. Hence, its update fashion is similar to RUA but requires much less information. Note that in the smart grid, power flow equation $\boldsymbol\theta=\textbf{SP}$ leads to
\begin{equation}\label{measure}
\sum\limits_{j\in \mathcal{N}_g}s_{ij}P_j+\sum\limits_{j\neq i\in \mathcal{N}_d}s_{ij}P_j=\theta_i-s_{ii}P_i,\ \forall i\in\mathcal{N}_d.
\end{equation}
Therefore, based on \eqref{iterativeupdate} and \eqref{measure}, one way for player $i$ to update his action at time $n$ is to know the current voltage angle $\theta_i^{(n)}$ at his bus which can be measured by PMU, and his net power injection update scheme is given by
\begin{align}
P_i^{(n+1)}=\min\Big( P_i^{\max},\ &\max \big[ -P_i^l,\notag\\
&\frac{1}{s_{ii}}(\gamma_i -\theta_i^{(n)} +s_{ii}P_i^{(n)} )\big]\Big).\label{pmu_update}
\end{align}

The only required knowledge for each player in this scheme are the electricity price and smart grid topology which are both known. Moreover, this distributed algorithm is stable and converges to the unique Nash equilibrium almost surely. The proof is similar to that of Theorem \ref{thm3} only by changing $\bar g_{-i}^{(n)}$ to $\theta_i^{(n)} - s_{ii}P_i^{(n)}$ and thus omitted here. For clarity, PDA is summarized in Algorithm \ref{algorithm3}.

\textit{\textbf{Remark 5}}: When a microgrid updates its generated renewable energy, it will consequently change the voltage angle profile $\{\theta_i\}_{i\in\mathcal{N}}$ in the smart grid. Therefore, $\theta_i^{(n)}$ in \eqref{pmu_update} will be changed due to other players' updates which makes the distributed algorithm feasible.

\begin{algorithm}[!t]
\caption{PMU-enabled distributed algorithm}\label{algorithm3}
\begin{algorithmic}[1]
\State Initialize $P_i^{(0)}\gets 0,\ P_i^l,\ P_i^{\max},\ \tau_i,\ \forall i\in\mathcal{N}_d$, $P_j,\ j\in\mathcal{N}_g$, tolerance $\delta>0$, $P^{(1)}\gets 2\delta$, $n\gets 1$
\While {$\lVert \textbf{P}_d^{g(n)} - \textbf{P}_d^{g(n-1)} \rVert_\infty > \delta$}
\While {$i\leq N_d$}
\State Generate a random number $k$ between $[0,1]$ with uniform distribution
\If {$k\leq\tau_i$}
\State PMU measures voltage angle $\theta_i^{(n)}$ at bus $i$
\State Obtain $P_i^{(n+1)}$ through \eqref{pmu_update}
\Else
\State $P_i^{(n+1)}\gets P_i^{(n)}$
\EndIf
\State $i\gets i+1$
\EndWhile
\State $n\gets n+1$
\EndWhile
\State $P_i^{g*}=P_i^{(n)}+P_i^l,\ i\in\mathcal{N}_d$
\State \textbf{return} $P_i^{g*},\ i\in\mathcal{N}_d$
\end{algorithmic}
\end{algorithm}

\subsection{Convergence Rate Analysis}\label{rateanalysis}
One method to measure the efficiency of the update algorithm is its convergence rate which can be quantified by the contraction mapping constant. For the IUA and RUA, the contraction constants are equal to
\begin{align*}
c_1&=\max_{i,j\neq i\in \mathcal{N}_d} \frac{s_{ij}}{s_{ii}}(N_d-1),\\
c_2&=\bar\tau\cdot \max_{i,j\neq i\in \mathcal{N}_d} \frac{s_{ij}}{s_{ii}}(N_d-1) + (1-\underline\tau).
\end{align*}

The smaller $c_1$ and $c_2$ are, the faster the algorithms converge to the Nash equilibrium, and more efficient for the microgrids to respond to the  power generation changes from the generator network. For both IUA and RUA, smaller values of $\max_{i,j\neq i\in \mathcal{N}_d} \frac{s_{ij}}{s_{ii}}$ and $N_d$ are desirable. The two parameters are related to the smart grid topology and the number of microgrids, respectively. A smaller $N_d$ can be interpreted as a smaller number of players, which makes it easier to reach a consensus. In addition, the values of $\bar \tau$ and $\underline \tau$ also have an impact on the convergence speed of RUA. Specifically, a smaller $\bar \tau$ and a larger $\underline \tau$ result in faster convergence. 

\textit{\textbf{Remark 6}}: Since $\bar \tau \geq \underline \tau$, then, for a given $\bar \tau$ or $\underline \tau$, the best outcome is achieved at $\bar \tau = \underline \tau$, which means that every player has the same update probability.

\section{Resiliency and Implementation of the Distributed Algorithm}\label{resilience}
The implementation of the fully distributed algorithm proposed in Section \ref{tech_algorithm} merely requires the information of market electricity price and smart grid topology. To study the resiliency of the algorithm, we focus on two fault models of the power system network. In addition, we develop a control framework to implement the algorithm in the smart grid.

\subsection{Fault Models of Smart Grid}
Without loss of generality, we discuss two major types of faults in the smart grid: (i) generator breakdown \cite{glover2011power}, and (ii) open-circuit of the transmission line \cite{kezunovic2011smart}. Specifically, the breakdown of generator $i$ can be captured by 
$P_i^g=0,\ i\in\mathcal{N}_g.$
Note that microgrid can be turned off during the operation, and this case can be captured by $P_i^g=0,\ i\in\mathcal{N}_d.$
The open-circle fault of the transmission line between buses $i$ and $j$ could lead to
$B_{ij}=0,\ P_{ij}=P_{ji}=0,\ i\neq j\in\mathcal{N},$
where $P_{ij}$ denotes the power flow from bus $i$ to bus $j$. For the balance and stability of the smart grid, generator or microgrid outage will increase the power generation of some other entities in the system. While the trip of a transmission line results in the re-dispatch of power in the smart grid.

Other failure and attacks models can also be studied using this framework including data injection attacks \cite{rawat2015detection,kim2011strategic}, unavailability of PMU data \cite{reinhard2012data}, and jamming attacks \cite{mo2012cyber,bou2013communication}. The fault models chosen here represent the major physical consequences of the cyber attacks.

\begin{figure}[!t]
\centering
\includegraphics[width=0.85\columnwidth]{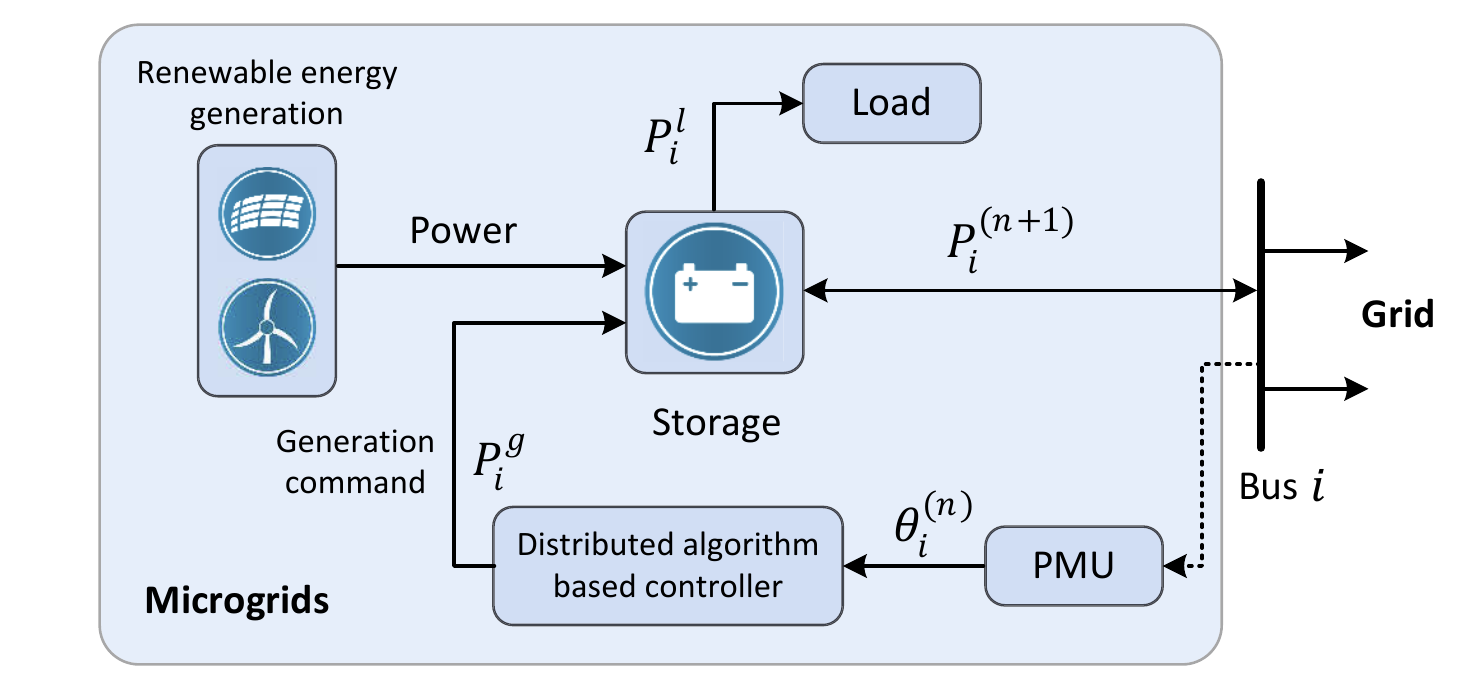}
\caption{The framework to implement the PMU-enabled distributed algorithm. PMU measures the voltage angle at the bus, and the controller generates a command regarding the amount of microgrid renewable energy injection from the local storage to the grid based on the received voltage angle.}\label{implementation}
\end{figure}

\subsection{Implementation Framework}
The proposed framework used to implement the PMU-enabled distributed algorithm is shown in Fig. \ref{implementation}. On one hand, renewable energy generators in the microgrids produce power and store them in the storage devices. On the other hand, the PMU measures the voltage angle $\theta_i^{(n)}$ at bus $i$ at time interval $n$, and sends it to the controller. Then, the controller generates a command based on the PMU-enabled distributed algorithm that informs the storage device to inject $P_i^{(n+1)}$ amount of renewable energy to bus $i$. Note that negative $P_i^{(n+1)}$ indicates that player $i$ buys power from the grid. 

In the framework, the power storage can be seen as unchanged during the implementation stage if the decision updates of microgrids are fast comparing with the physical storage dynamics. Under the condition that the microgrids make strategies over a period of time,  then the storage dynamics need to be considered, and the algorithm is generalized to the dynamic setting.

\textit{\textbf{Remark 7}}: The PMU-enabled algorithm is applicable at buses where PMU is installed. For nodes without PMU sensors, IUA and RUA can be used for the update. Then, the algorithm can be generalized to a hybrid one.

\section{Case Studies}\label{simulation}
In this section, we validate our proposed algorithms via case studies based on the IEEE 14-bus system. Fig. \ref{14bus} shows the power system model. Buses 3, 6 and 14 are three players in the power generation game, and they are connected to microgrids that generate wind, solar, geothermal renewable energies, respectively. Generation bus 2 is selected as the slack bus which serves as a basis of the power system and also absorbs the power uncertainties in the grid \cite{glover2011power,sauer1998power}. Power transmission line parameters of the 14-bus system can be found in \cite{rich}. Without loss of generality, the weighting parameters $\eta_i=\eta$ are the same for $\forall i\in \mathcal{N}_d$, and the generation capacity of microgrid $i$ is equal to $P_{i,\max}^g=100\mathrm{MW},\ \forall i \in\mathcal{N}_d$. Other parameters for the case studies are summarized in Table \ref{t1}. For convenience, the number in the subscript of each parameter denotes the bus indexing in Fig. \ref{14bus}. 
\begin{table}
\centering
\renewcommand\arraystretch{1.3}
\caption{Numerical Value of Parameters in Simulation\label{t1}}
\begin{tabular}{|c|c|c|c|} \hline
Parameter&Value&Parameter&Value\\ \hline
$\psi_3$ & $120\$/\rm{MWh}$ & $P_3^l$ & $120\rm{MW}$\\ \hline
$\psi_6$ & $100\$/\rm{MWh}$ & $P_4^l$ & $100\rm{MW}$\\ \hline
$\psi_{14}$ & $80\$/\rm{MWh}$ & $P_6^l$ & $105\rm{MW}$ \\ \hline
$\zeta$& $140\$/\rm{MWh}$& $P_8^l$& $110\rm{MW}$\\ \hline
$P_1^g$& $280\rm{MW}$ & $P_{10}^l$& $90\rm{MW}$\\ \hline
$P_{11}^g$& $160\rm{MW}$  & $P_{12}^l$& $85\rm{MW}$\\ \hline
$\eta$ & $3\times 10^4$& $P_{14}^l$ & $70\rm{MW}$ \\
\hline
\end{tabular}
\end{table}

\subsection{Effectiveness of Algorithms}
 \begin{figure}[t]
\centering
\includegraphics[width=0.65\columnwidth]{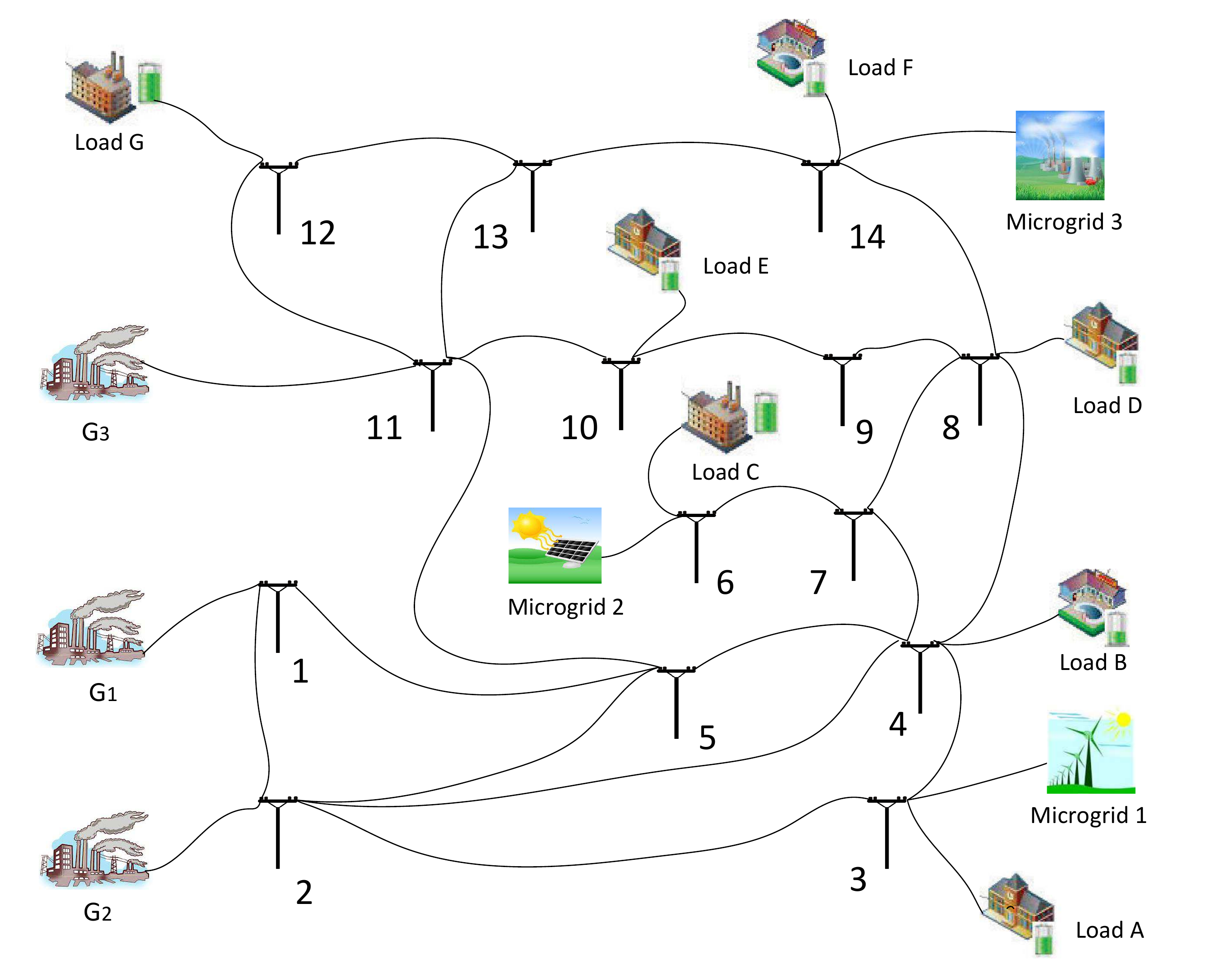}
\caption{IEEE 14-Bus power system model. Buses 3, 6 and 14 are connected  to microgrids that generate renewable energies and are three players in the power generation game.}\label{14bus}
\end{figure}

Based on the grid shown in Fig. \ref{14bus}, we obtain $N_d=3$, and the elements in $\textbf{S}$ corresponding to microgrids constitute a submatrix $$\textbf{S}' = \begin{bmatrix} 0.1212&0.0371&0.0349\\0.0371& 0.3850 & 0.1471\\
0.0349& 0.1471 & 0.3909 \end{bmatrix}.$$ 
Then, based on $\textbf{S}'$, we obtain
$\max_{i,j\neq i\in \mathcal{N}_d} \frac{s_{ij}}{s_{ii}}=0.382$. Therefore, the sufficient condition $\max_{i,j\neq i\in \mathcal{N}_d} \frac{s_{ij}}{s_{ii}}(N_d-1)<1$ is satisfied. The results of the iterative update algorithm are shown in Fig. \ref{iterativeresults}. The algorithm converges to the Nash equilibrium $P_3^{g*}=55.1\rm{MW}$, $P_6^{g*}=34.7\rm{MW}$ and $P_{14}^{g*}=27.9\rm{MW}$ after 7 iterations.
\begin{figure}[t]
  \centering
  \subfigure[]{
    \includegraphics[width=1.65in]{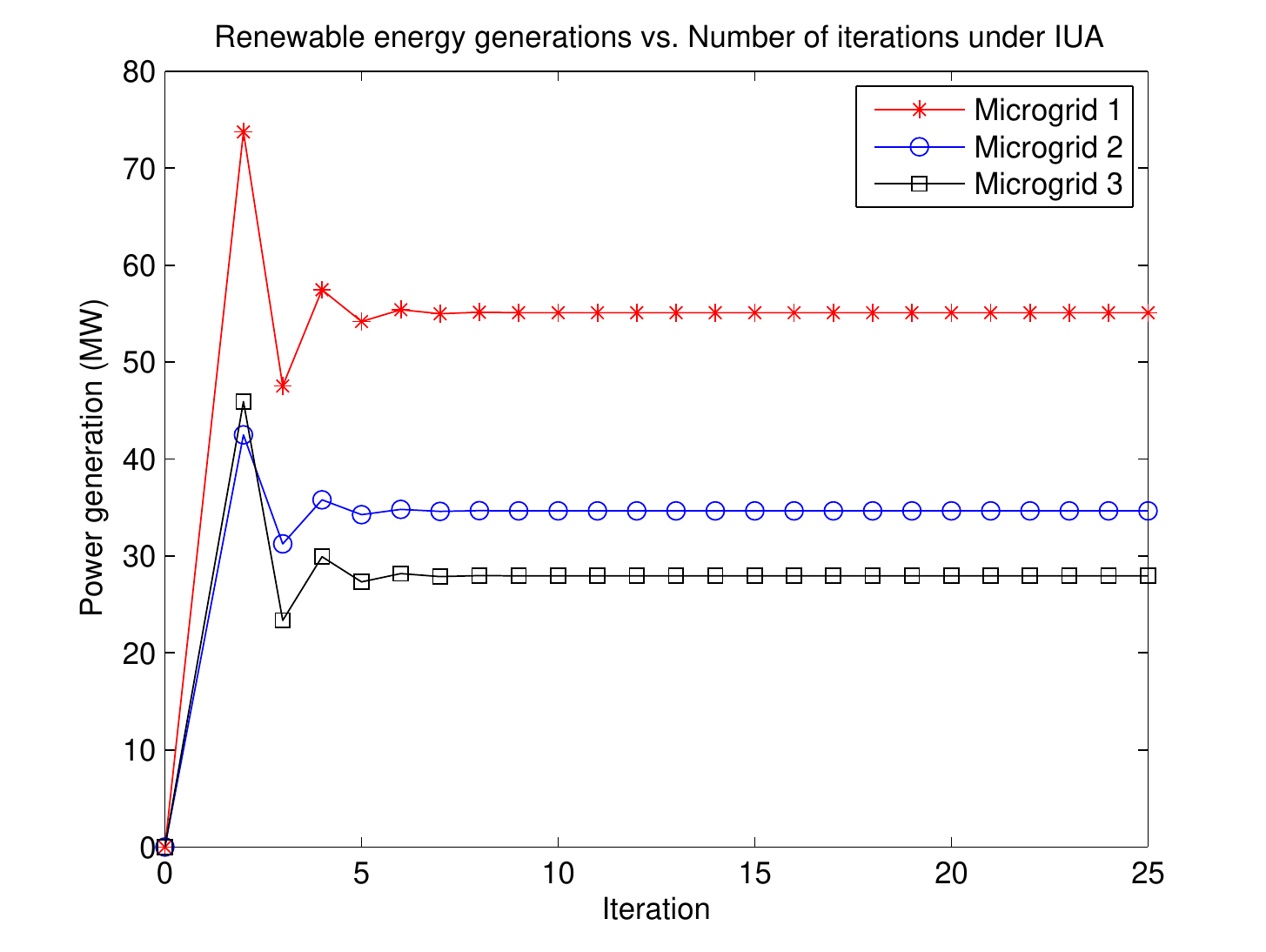}}
	 \subfigure[]{
    \includegraphics[width=1.65in]{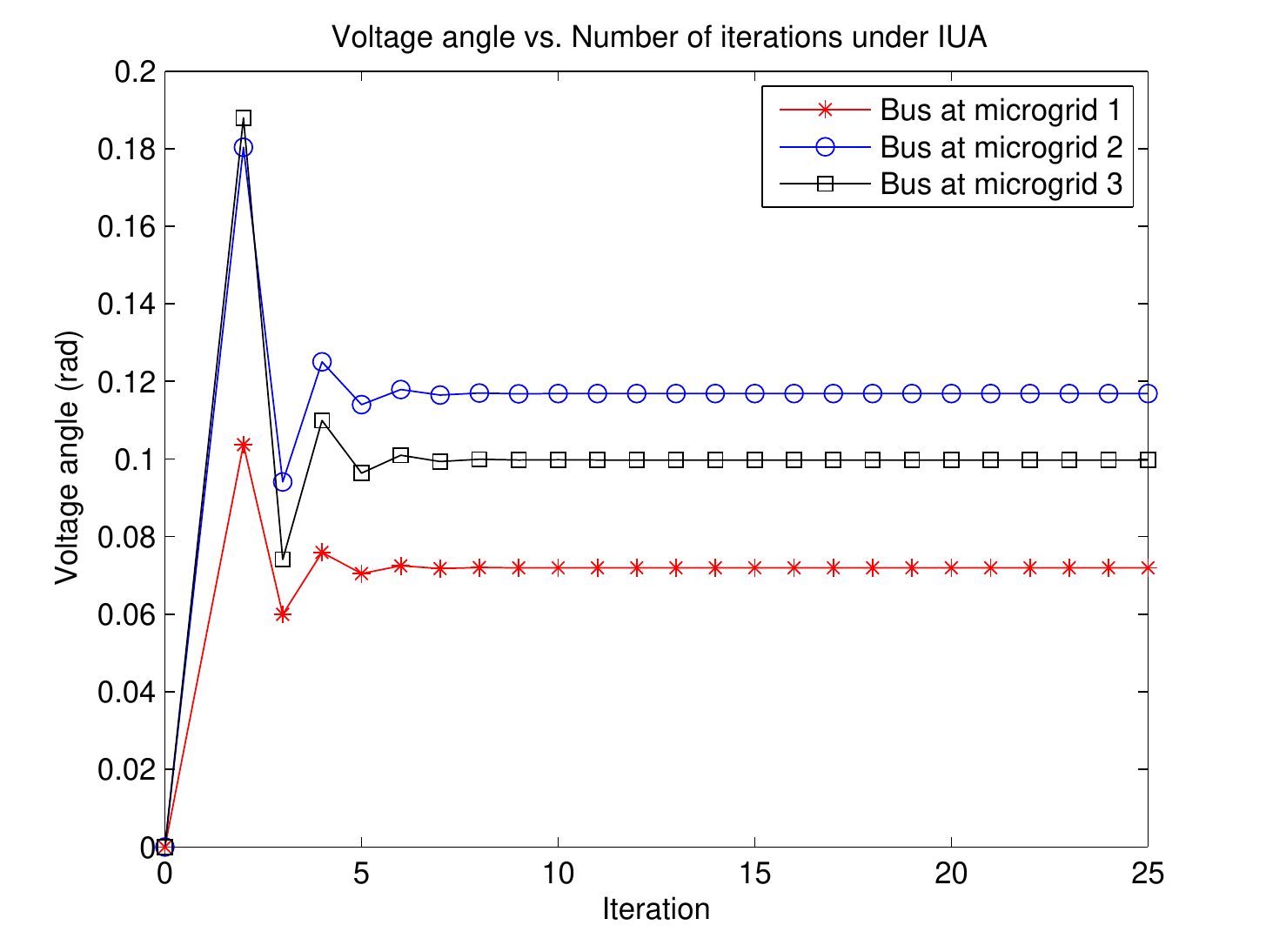}}
  \caption[]{(a) and (b) shows the results of the renewable energy generations and the bus voltage angles of the three microgrids by using IUA, respectively.}
  \label{iterativeresults}
\end{figure}

\begin{figure}[t]
  \centering
  \subfigure[]{
    \includegraphics[width=1.65in]{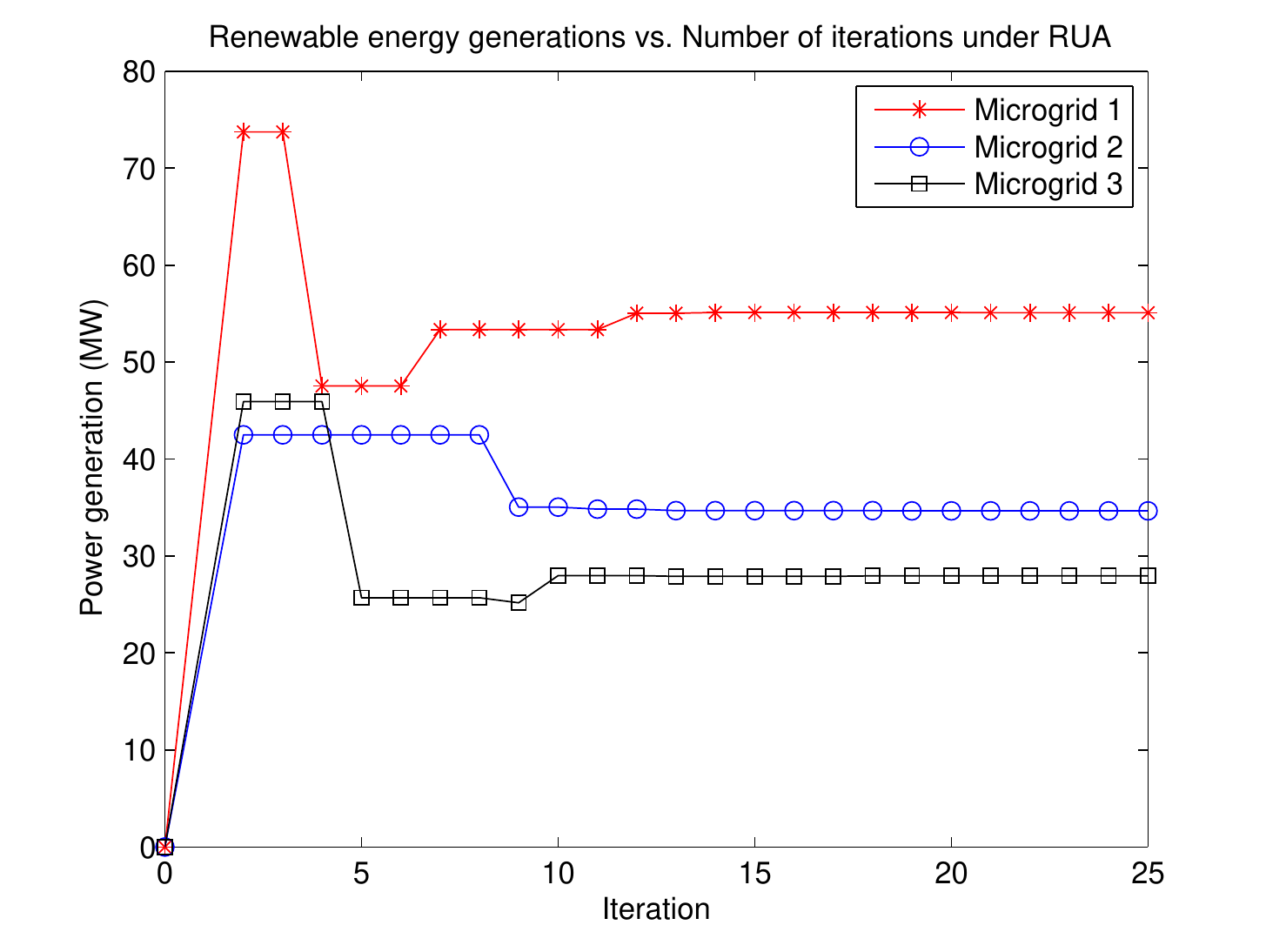}}
	 \subfigure[]{
    \includegraphics[width=1.65in]{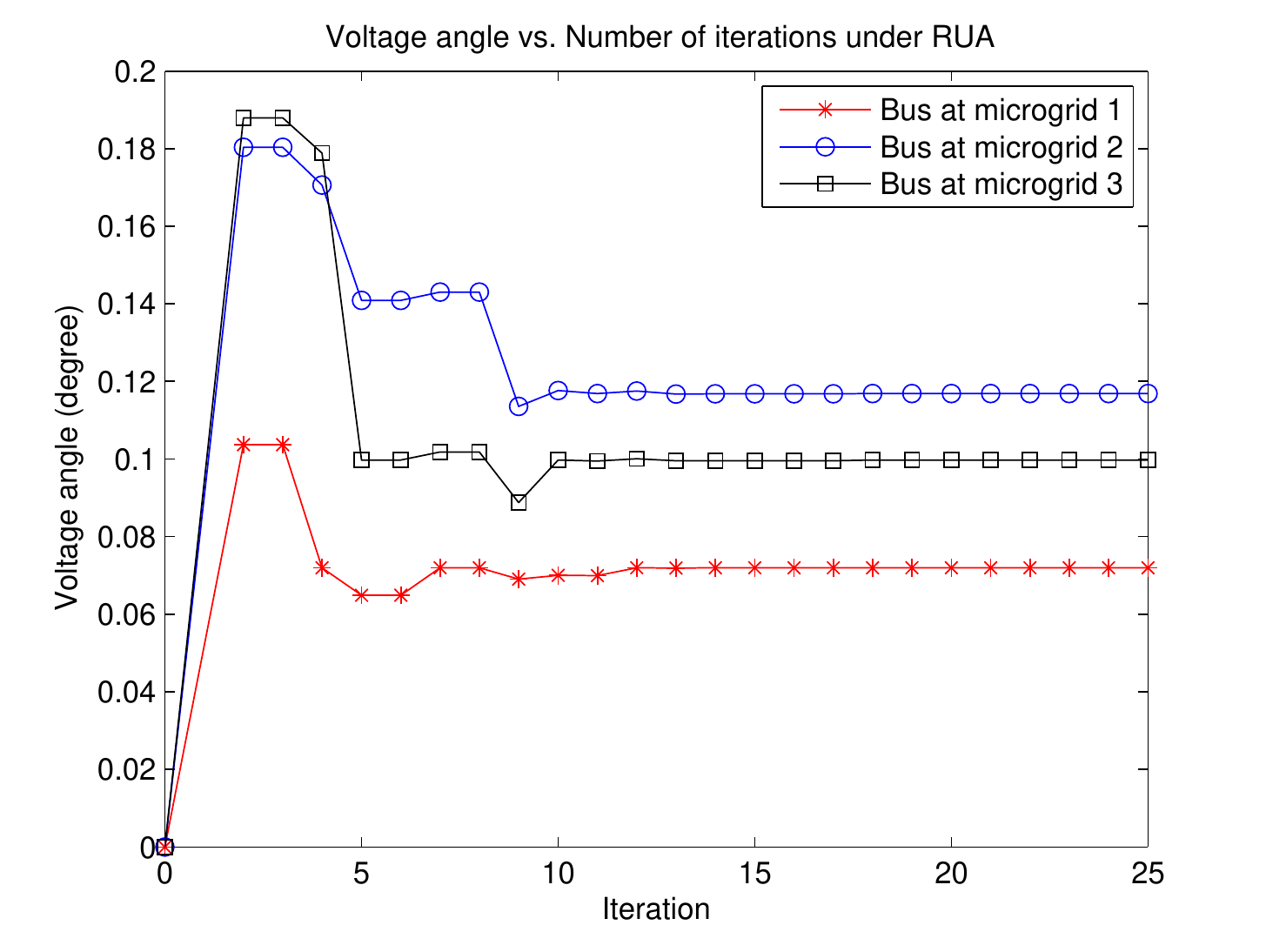}}
  \caption[]{(a) and (b) shows the results of the renewable energy generations and the bus voltage angles of the three microgrids by using RUA, respectively.}
  \label{randomresults}
\end{figure}

To test the RUA, we set $\bar \tau=\underline{\tau}=0.6$ for simplicity, and it also satisfies the sufficient condition in Theorem \ref{thm3}. The results of RUA are shown in Fig. \ref{randomresults}. We can see that the algorithm converges to the Nash equilibrium after 14 iterations. Moreover, the values of $P_3^{g*}$, $P_6^{g*}$ and $P_{14}^{g*}$ at equilibrium are the same as those obtained by using the IUA which validates the effectiveness of RUA.

\subsection{Resiliency of the Distributed Algorithm}
For each update, each player needs to measure the voltage angle at his bus via PMU. In the following case studies, we set $\tau_1=0.65$, $\tau_2=0.7$ and $\tau_3=0.8$ for three players, respectively. Thus, $\bar \tau=0.8$ and $\underline{\tau}=0.65$. We can verify that the sufficient condition in Theorem \ref{thm3} is still satisfied. For a better illustration of the resilience property, we add the failure to the smart grid when the system is at an equilibrium state.

\subsubsection{Generator Breakdown} We consider the scenario that generator $G_3$ connecting with bus 12 is out of service at time step 19. The results are shown in Fig. \ref{gene_down}. The first equilibrium is the same as that obtained by using IUA and RUA. After the fault, the PDA re-converges to a new equilibrium point $P_3^{g*}=60.3\rm{MW}$, $P_6^{g*}=47.8\rm{MW}$ and $P_{14}^{g*}=46.7\rm{MW}$ in another 10 time steps which reveals the resiliency of the algorithm.

\subsubsection{Microgrid Turn-off} When microgrid 3 at bus 14 is turned off at time step 19, the results are shown in Fig. \ref{micro_down}. The distributed algorithm can reach a new equilibrium $P_3^{g*}=62.6\rm{MW}$, $P_6^{g*}=36.5\rm{MW}$ and $P_{14}^{g*}=0\rm{MW}$ only in 4 steps after the shutdown of microgrid 3. The faster convergence rate in this case than that of generator breakdown is due to the number of players is reduced from 3 to 2 which validates the analysis in Section \ref{rateanalysis}.

\subsubsection{Open-circuit Fault of Transmission Line} We consider the scenario that the transmission line connecting buses 8 and 14 in the grid is of open-circuit fault at time step 19. The results for this case are shown in Fig. \ref{line_down}. We can see that PDA is still able to re-converge to an equilibrium $P_3^{g*}=57.9\rm{MW}$, $P_6^{g*}=36.4\rm{MW}$ and $P_7^{g*}=16.4\rm{MW}$ in 7 other time steps after the fault occurs, and the new equilibrium leads to power re-dispatch in the smart grid.

\begin{figure*}[t]
  \centering
  
  \subfigure[]{%
    \includegraphics[width=0.32\textwidth]{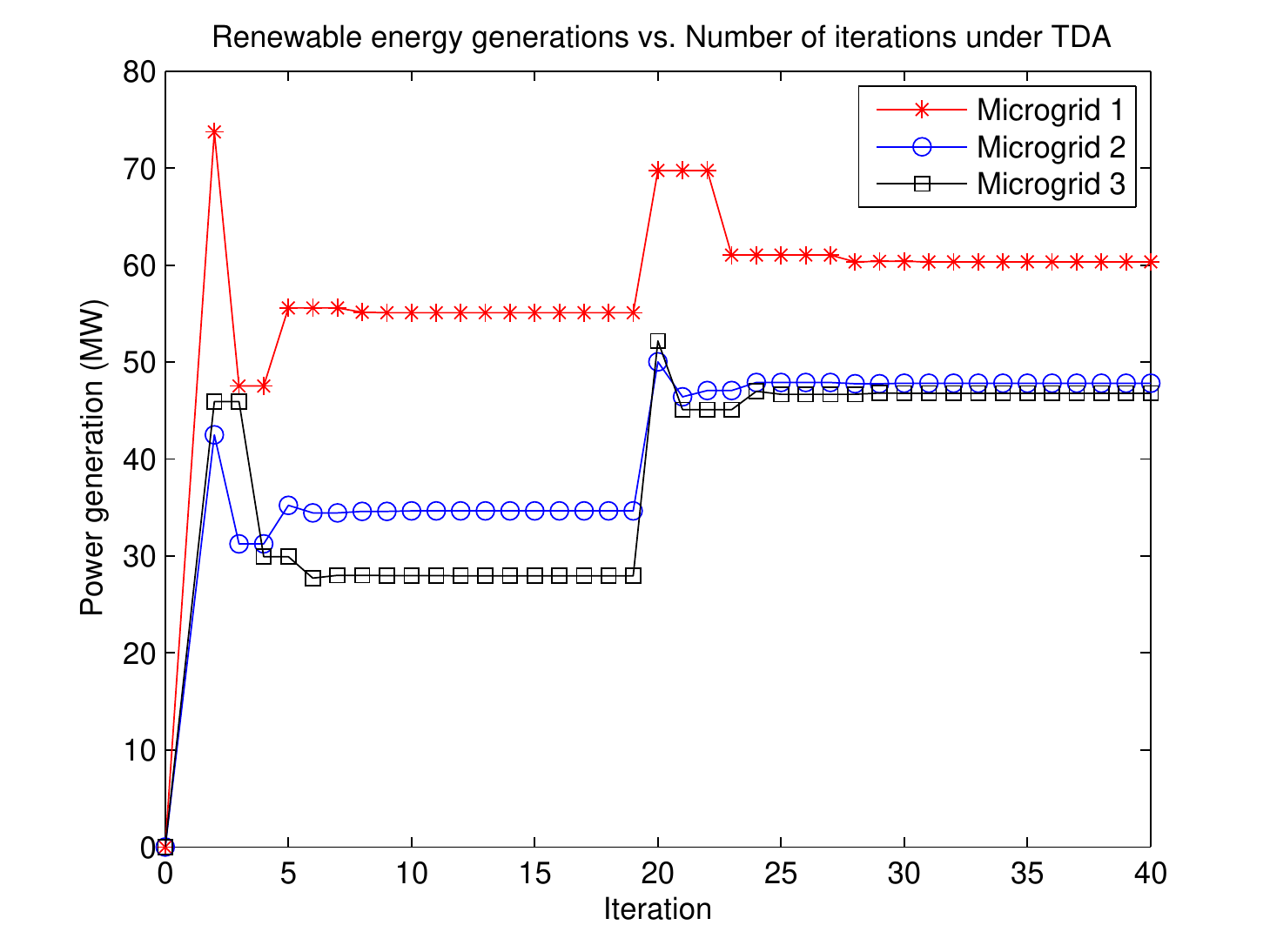}%
    \label{gene_down}%
  }%
  \hfill
  \subfigure[]{%
    \includegraphics[width=0.32\textwidth]{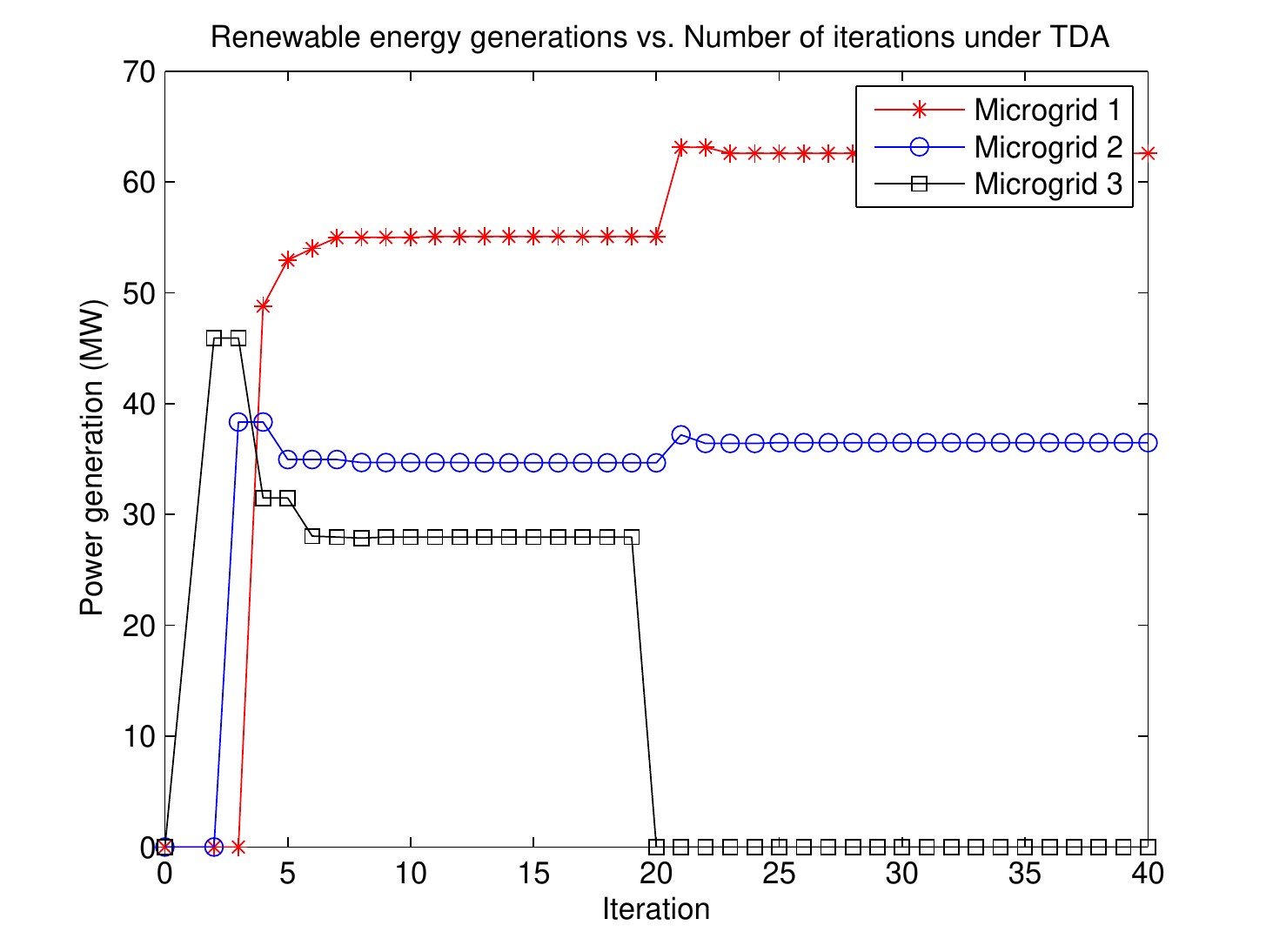}%
    \label{micro_down}%
  }%
  \hfill
  \subfigure[]{%
    \includegraphics[width=0.32\textwidth]{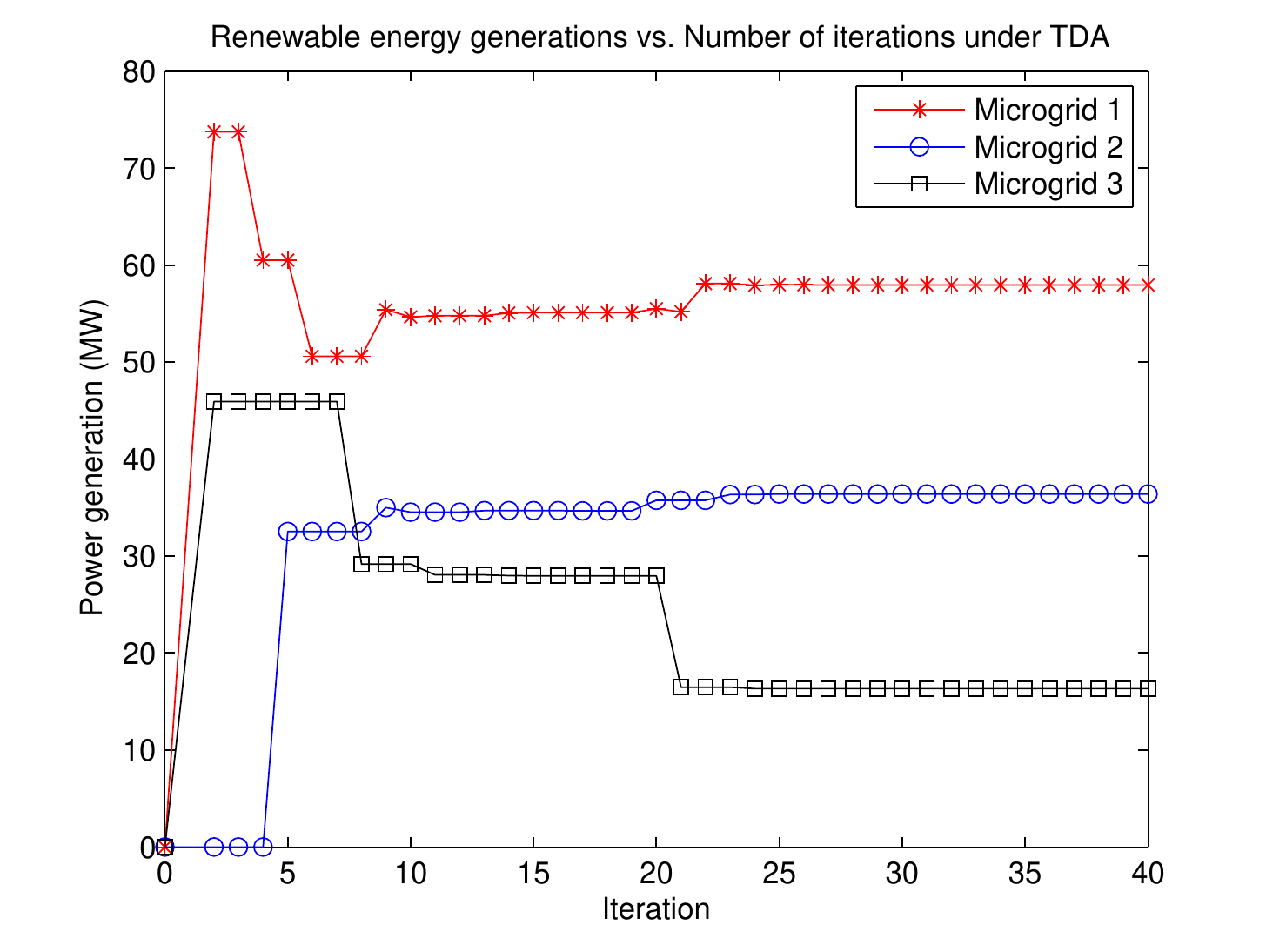}%
    \label{line_down}%
  }%
  \caption{All three faults occur at time step 19. (a), (b) and (c) show the renewable energy generations of three microgrids related to the generator $G_3$ turn-off, microgrid 3 shut down, and open-circuit of the transmission line 8-14 in the smart grid, respectively.}
  \label{bigfig2}
\end{figure*}

\section{Conclusion}\label{conclusion}
In this paper, we have used a game-theoretic framework to capture the interactions between microgrids in a power system. In addition, we have proposed a resilient and fully distributed algorithm for microgrids to update their strategy on the amount of renewable energy generation. The knowledge that microgrid requires is only the voltage angle at his bus which can be measured by PMU. The effectiveness and resiliency of the algorithm have been validated via case studies based on the IEEE 14-bus system. One future work would be considering the generators in the smart grid as leaders, and designing efficient, resilient and distributed algorithms for microgrids. The second extension would be incorporating the storage dynamics into the established framework, and obtaining dynamic optimal power scheduling policies for players. Another future work would be investigating the security of PMU devices, and making the algorithm more resilient when PMU measured data contains errors.

\appendices
\section{Proof of Lemma \ref{uniqueNE}}\label{appLemma}
\begin{proof}
Remind that $-\textbf{B}$ is a real symmetric reduced Laplacian matrix, and $-B_{ij}\leq 0$ for $\forall i\neq j\in\mathcal{N}$. In addition, $\sum_{j\in\mathcal{N}} |B_{ij}|\leq |B_{ii}|$, $\forall i\in\mathcal{N}$. Thus, $-\textbf{B}$ is diagonally dominant. Since $-\textbf{B}$ is invertible, and $-B_{ii}>0$, $\forall i\in\mathcal{N}$, $-\textbf{B}$ is positive definite \cite{berman2003completely}. Together with non-positive off-diagonal elements of $-\textbf{B}$, we know that $-\textbf{B}$ is an $M$-matrix.

Let $c>0$ be the largest diagonal entry of $-\textbf{B}$, then, $-\textbf{B}$ can be expressed as $-\textbf{B}= c\textbf{I}-\textbf{A}$, where $\textbf{I}$ is an identity matrix; and $\textbf{A}$ is a non-negative symmetric matrix. By using the Perron-Frobenius theorem \cite{horn2012matrix}, $\rho(\textbf{A})$ is a positive eigenvalue of $\textbf{A}$, where $\rho(\cdot)$ denotes the spectral radius of a matrix. In addition, for any eigenvalue $\lambda$ of $\textbf{A}$, $c-\lambda$ is an eigenvalue of $-\textbf{B}$. Therefore, $c-\rho(\textbf{A})$ is an eigenvalue of $-\textbf{B}$. Since $-\textbf{B}$ is positive definite, we obtain
\begin{equation}\label{radius}
c-\rho(\textbf{A})>0 \Longleftrightarrow c/\rho(\textbf{A})>1.
\end{equation}

Denote $\tilde{\textbf{A}} = \frac{1}{c}\textbf{A}$, and $-\tilde{\textbf{B}} = -\frac{1}{c}\textbf{B}$. Then, $-\tilde{\textbf{B}} = \textbf{I}-\tilde{\textbf{A}}$. Note that $\rho(\tilde{\textbf{A}})=\frac{1}{c}\rho(\textbf{A})<1$ by \eqref{radius}. The summation of infinite series
\begin{align*}
-\tilde{\textbf{B}}\sum_{n=0}^\infty (\tilde{\textbf{A}})^n = (\textbf{I}-\tilde{\textbf{A}})\sum_{n=0}^\infty (\tilde{\textbf{A}})^n &= \sum_{n=0}^\infty(\tilde{\textbf{A}})^n - \sum_{n=1}^\infty(\tilde{\textbf{A}})^n\\
& = (\tilde{\textbf{A}})^0 = \textbf{I}.
\end{align*}
Therefore, we obtain $-(\tilde{\textbf{B}})^{-1} = \sum_{n=0}^\infty (\tilde{\textbf{A}})^n$, and
$$-\textbf{B}^{-1} = -(c\tilde{\textbf{B}})^{-1}=-\frac{1}{c}(\tilde{\textbf{B}})^{-1} = \frac{1}{c}\sum_{n=0}^\infty (\tilde{\textbf{A}})^n.$$
Note that $-\textbf{B}^{-1}=[s_{ij}]_{i,j\in \mathcal{N}}$ is a summation of a series of non-negative symmetric matrices, and thus $-\textbf{B}^{-1}$ is symmetric, and ${s_{ij}\geq 0},\ \forall i,j\in\mathcal{N}$. Furthermore, since $(\tilde{\textbf{A}})^0 = \textbf{I}$, and $\sum_{n=1}^\infty(\tilde{\textbf{A}})^n$ is non-negative, then, ${s_{ii}> 0},\ \forall\ i\in\mathcal{N}$.
\end{proof}

\section{Proof of Theorem \ref{uniqueNE}}\label{app1}
\begin{proof}
When the equilibrium solution $\textbf{P}_d^{*}$ is an inner point, to show its uniqueness, one way is to show that matrix \textbf{H} is invertible. Since $\textbf{S}$ is of full rank, and base on the Sylvester's criterion, the upper left $N_d$-dimensional square matrix $\textbf{S}_1$ in $\textbf{S}$ is also invertible. Note that determinant $|\textbf{S}_1|\neq 0$ and it satisfies $|\textbf{S}_1|=|\textbf{H}|\cdot \prod_{i\in\mathcal{N}_d} s_{ii}$. Because $s_{ii}>0,\ \forall i\in\mathcal{N}_d$, $|\textbf{H}|\neq 0$ and thus \textbf{H} is invertible. Therefore, game $\widetilde G$ admits a unique Nash equilibrium in this case.

Then, we consider the case that some players achieve the boundary at the equilibrium and show that the boundary solution is unique. For convenience, we reorganize the players' indexing as follows: players $\{1,2,...,M_1\}$, $\{M_1+1,...,M_2\}$ and $\{M_2+1,...,N_d\}$ are with inner, zero and maximum power generation at the equilibrium, respectively. When $M_2=N_d$, no player generates the maximum power at the equilibrium, while $M_1=M_2$ indicates that all players generate power at the equilibrium. In addition, we delete the rows and columns corresponding to the players with the boundary power generation in \eqref{matrixform} which yields
\begin{equation*}
\begin{bmatrix}
1&\frac{s_{12}}{s_{11}}&\frac{s_{13}}{s_{11}}& \dotsm &\frac{s_{1M_1}}{s_{11}}\\[6pt]
\frac{s_{21}}{s_{22}}&1&\frac{s_{23}}{s_{22}}& \dotsm &\frac{s_{2M_1}}{s_{22}}\\[6pt]
\vdots & \vdots & \vdots & \ddots & \vdots\\[6pt]
\frac{s_{M_1 1}}{s_{M_1 M_1}}& \frac{s_{M_1 2}}{s_{M_1 M_1}} & \dotsm & \frac{s_{M_1 (M_1-1)}}{s_{M_1 M_1}} & 1
\end{bmatrix}
\begin{bmatrix}
P_1^*\\[6pt]
P_2^*\\[6pt]
\vdots\\[6pt]
P_{M_1}^*
\end{bmatrix}= \textbf{q}_{M_1}^{\dagger}
\end{equation*}
\begin{flalign}\label{matrixform2}
&\Leftrightarrow \quad \textbf{H}_{M_1}\textbf{P}_{d,M_1}^{*}=\textbf{q}_{M_1}^{\dagger},&
\end{flalign}
where $\textbf{q}_{M_1}^{\dagger}:=[q_1^{\dagger},q_2^{\dagger},...,q_{M_1}^{\dagger}]'$,
$
q_i^{\dagger}:=q_i-\frac{1}{s_{ii}}\big(s_{i(M_1+1)}P_{M_1+1}^{\max}
+s_{i(M_1+2)}P_{M_1+2}^{\max}+...+s_{iM_2}P_{M_2}^{\max}\big)
$
for $i\in\{1,2,...,M_1\}$. It directly follows that $\textbf{H}_{M_1}$ is invertible, and thus the solution $\textbf{P}_{d,M_1}^{*}$ exists and is unique. The reason that the boundary solution is an equilibrium is as follows. The payoff function \eqref{utility2} for player $i$ is quadratic and convex. For the cases $\gamma_i \leq \bar{g}_{-i}-s_{ii}P_i^l$ and $\gamma_i \geq \bar{g}_{-i}+s_{ii}P_i^{\max}$, player $i$'s optimal response with respect to $P_{-i}^g$ is achieved at $P_i^{g*}=0$ and $P_i^{g*}=P_i^{\max}$, respectively. 

Next, we argue that the boundary solution is unique. One possible case is that player $i$, $i\in\{1,2,...,M_1\}$, achieves the boundary power at the equilibrium. This case is ruled out when we construct \eqref{matrixform2}. Another possible case is that player $j$, $j\in\{M_1+1,M_1+2,...,N_d\}$, has the inner power generation. This cannot be a Nash equilibrium either, since player $j$ can achieve a better payoff at the boundary. Other possible cases can be a combination of these two cases and can be easily eliminated through similar analysis. Therefore, the boundary Nash equilibrium solution is unique.
\end{proof}

\section{Proof of Theorem \ref{stability}}\label{app2}
\begin{proof}
First, define $\Delta P_i^{(n)}:= P_i^{(n)}-P_i^*,\ i\in \mathcal{N}_d$. When $P_i^{\max}>P_i^*>-P_i^l,\ \forall i\in \mathcal{N}_d$, and given all players' generated renewable energy  except $i$th player's at time step $n$, we have the following:

(1) $\bar{g}_{-i}^{(n)}-s_{ii}P_i^l< \gamma_i  < \bar{g}_{-i}^{(n)}+s_{ii}P_i^{\max}$: 
\begin{align*}
\Delta P_i^{(n+1)}&= P_i^{(n+1)}-P_i^*\\
&=\frac{1}{s_{ii}}\big(\gamma_i - \bar{g}_{-i}^{(n)}\big)-\frac{1}{s_{ii}}\big(\gamma_i - \bar{g}_{-i}^{*}\big)\\
&=\frac{1}{s_{ii}} \sum\limits_{j\neq i \in \mathcal{N}_d}s_{ij}\big(P_j^*-P_j^{(n)} \big)\\
&=-\frac{1}{s_{ii}} \sum\limits_{j\neq i \in \mathcal{N}_d}s_{ij}\Delta P_j^{(n)}.
\end{align*}

(2) $\gamma_i\leq \bar{g}_{-i}^{(n)}-s_{ii}P_i^l$:
\begin{align*}
\Delta P_i^{(n+1)} &= -P_i^l-P_i^*\qquad (\mathrm{negative})\\
&>\frac{1}{s_{ii}}\big(\gamma_i - \bar{g}_{-i}^{(n)}\big)-\frac{1}{s_{ii}}\big(\gamma_i - \bar{g}_{-i}^{*}\big)\\
&=-\frac{1}{s_{ii}} \sum\limits_{j\neq i \in \mathcal{N}_d}s_{ij}\Delta P_j^{(n)}.
\end{align*}

(3) $\gamma_i\geq \bar{g}_{-i}^{(n)}+s_{ii}P_i^{\max}$:
\begin{align*}
\Delta P_i^{(n+1)}&= P_i^{\max}- P_i^*\qquad (\mathrm{positive})\\
 &<\frac{1}{s_{ii}}\big(\gamma_i - \bar{g}_{-i}^{(n)}\big)-\frac{1}{s_{ii}}\big(\gamma_i - \bar{g}_{-i}^{*}\big)\\
&=-\frac{1}{s_{ii}} \sum\limits_{j\neq i \in \mathcal{N}_d}s_{ij}\Delta P_j^{(n)}.
\end{align*}

To sum up,
\begin{align*}
|\Delta P_i^{(n+1)}|\begin{cases}
\begin{array}{l}
<\sum\limits_{j\neq i \in \mathcal{N}_d}\frac{s_{ij}}{s_{ii}}|\Delta P_j^{(n)}|, \mathrm{if}\ \gamma_i\leq \bar{g}_{-i}^{(n)}-s_{ii}P_i^l,\\
\qquad\qquad\qquad\quad\ \ \mathrm{or}\ \gamma_i\geq \bar{g}_{-i}^{(n)}+s_{ii}P_i^{\max},\\
= \sum\limits_{j\neq i \in \mathcal{N}_d}\frac{s_{ij}}{s_{ii}}|\Delta P_j^{(n)}|,\qquad\qquad \mathrm{otherwise}.
\end{array}
\end{cases}
\end{align*}

When the generated renewable energy of an arbitrary player $i$ at the Nash equilibrium point is 0, i.e., $P_i^*=-P_i^l$, then through similar analysis, we have
\begin{align*}
|\Delta P_i^{(n+1)}|
\begin{cases}
\begin{array}{ll}
= 0, \qquad\qquad\qquad\quad \mathrm{if}\ \gamma_i \leq \bar{g}_{-i}-s_{ii}P_i^l,\\
<\sum\limits_{j\neq i \in \mathcal{N}_d}\frac{s_{ij}}{s_{ii}}|\Delta P_j^{(n)}|, \qquad\qquad \mathrm{otherwise}.
\end{array}
\end{cases}
\end{align*}

In addition, when the generated power of player $i$ at the Nash equilibrium point is $P_{i,\max}^g$, i.e., $P_i^*=P_i^{\max}$, then,
\begin{align*}
|\Delta P_i^{(n+1)}|\begin{cases}
\begin{array}{ll}
= 0,\qquad\qquad\qquad \mathrm{if}\ \gamma_i \geq \bar{g}_{-i}+s_{ii}P_i^{\max},\\
<\sum\limits_{j\neq i \in \mathcal{N}_d}\frac{s_{ij}}{s_{ii}}|\Delta P_j^{(n)}|,\qquad\qquad \mathrm{otherwise}.
\end{array}
\end{cases}
\end{align*}
Therefore, 
\begin{equation}\label{contraction}
|\Delta P_i^{(n+1)}|\leq \frac{1}{s_{ii}} \sum\limits_{j\neq i \in \mathcal{N}_d}s_{ij}|\Delta P_j^{(n)}|
\end{equation}
 holds for all three cases at any time step $n$.

Next, let $\lVert \Delta P \rVert_\infty$ denote the infinity-norm of the vector $(\Delta P_1,\Delta P_2,...,\Delta P_{N_d})^T$, i.e., $\lVert \Delta P \rVert_\infty= \max_{i \in\mathcal{N}_d} |\Delta P_i|$. Then, we have
\begin{align*}
\lVert \Delta P^{(n+1)} \rVert_\infty &\leq \max_{i\in\mathcal{N}_d} \big\{  \frac{1}{s_{ii}} \sum_{j\neq i\in \mathcal{N}_d}s_{ij}|\Delta P_j^{(n)}| \big\}\\
&\leq \max_{i,j\neq i \in\mathcal{N}_d} \frac{s_{ij}}{s_{ii}}(N_d-1) \lVert \Delta P^{(n)} \rVert_\infty.
\end{align*}
Therefore,
$\max_{i,j\neq i\in \mathcal{N}_d} \frac{s_{ij}}{s_{ii}}(N_d-1)<1 $ is a sufficient condition under which \eqref{contraction} is a \textit{contraction mapping} that ensures the global stability and convergence of IUA.
\end{proof}

\section{Proof of Theorem \ref{thm3}}\label{app3}
\begin{proof}
Based on \eqref{contraction}, we have the following
\begin{align*}
&\mathbb{E}(|\Delta P_i^{(n+1)}|)\\
&= \mathbb{E}\Big( |\Delta P_i^{(n+1)}|\Big|\ P_i\ \mathrm{updates\ at\ time}\ n \Big)\cdot\tau_i\\
&+\mathbb{E}\Big( |\Delta P_i^{(n)}|\Big|\ P_i\ \mathrm{does\ not\ update\ at\ time}\ n \Big)\cdot(1-\tau_i)\\
&\leq \frac{\tau_i}{s_{ii}} \sum\limits_{j\neq i \in \mathcal{N}_d}s_{ij}\cdot\mathbb{E}(|\Delta P_i^{(n)}|)+(1-\tau_i)\cdot\mathbb{E}(|\Delta P_i^{(n)}|).
\end{align*}
Define $\lVert \Delta P \rVert_\infty:= \max_{i \in\mathcal{N}_d} \mathbb{E}(|\Delta P_i|)$. Then,
\begin{align*}
\max_{i \in\mathcal{N}_d}\ \mathbb{E}(|\Delta P_i^{(n+1)}|)\leq  \max_{i,j\neq i\in \mathcal{N}_d} \tau_i \frac{s_{ij}}{s_{ii}}(N_d-1)\lVert \Delta P^{(n)} \rVert_\infty&\\
\qquad+\max_i\ (1-\tau_i)\lVert \Delta P^{(n)} \rVert_\infty &,
\end{align*}
which is equivalent to
\begin{align}
\lVert \Delta P&^{(n+1)} \rVert_\infty\leq \notag\\
& \Big( \bar\tau\cdot \max_{i,j\neq i\in \mathcal{N}_d} \frac{s_{ij}}{s_{ii}}(N_d-1)+ (1-\underline\tau) \Big)\lVert \Delta P^{(n)} \rVert_\infty.\label{contractionrandom}
\end{align}
Therefore, $\bar\tau\cdot \max_{i,j\neq i\in \mathcal{N}_d} \frac{s_{ij}}{s_{ii}}(N_d-1) + (1-\underline\tau)<1$ implies
\begin{align}
\bar{\tau}\cdot \max_{i,j\neq i\in \mathcal{N}_d} \frac{s_{ij}}{s_{ii}}(N_d-1)<\underline{\tau}\label{convergecondition}
\end{align} which is a sufficient condition that leads the right hand side in \eqref{contractionrandom} to a contraction mapping, and thus ensures the stability and convergence of the random update algorithm in the infinity norm. Next, we show a stronger convergence, \textit{almost sure} convergence, of the algorithm under \eqref{convergecondition}.

From \eqref{contractionrandom}, we have
\begin{align}
\lVert \Delta P^{(n)} \rVert_\infty \leq \xi\lVert\Delta P^{(n-1)}\rVert_\infty\leq...\leq \xi^{n}\lVert \Delta P^{(0)} \rVert_\infty,\label{inequality}
\end{align}
where $0<\xi<1$. Moreover, by using the Markov inequality, we obtain
\begin{align}
\sum_{n=1}^{\infty}\Pr\big( |\Delta P_i^{(n)}|> \epsilon \big) &\leq \sum_{n=1}^{\infty} \frac{\mathbb{E}(|\Delta P_i^{(n)}|)}{\epsilon}\notag\\
&\leq \frac{1}{\epsilon}\sum_{n=1}^{\infty} \lVert \Delta P^{(n)} \rVert_\infty,\label{markovinequality}
\end{align}
where $\epsilon>0$ and $\Pr(\cdot)$ is the probability measure. Inserting \eqref{inequality} into \eqref{markovinequality} yields
\begin{align*}
\sum_{n=1}^{\infty}\Pr\big( |\Delta P_i^{(n)}|> \epsilon \big) &\leq \frac{1}{\epsilon} \sum_{n=1}^{\infty}  \xi^{n}\lVert \Delta P^{(0)} \rVert_\infty\\
&=\frac{1-\xi^{\infty}}{\epsilon(1-\xi)} \lVert \Delta P^{(0)} \rVert_\infty\\
&=\frac{1}{\epsilon (1-\xi)}  \lVert \Delta P^{(0)} \rVert_\infty.
\end{align*}
Therefore, the increasing sequence $\sum_{n=1}^{N}\Pr( |\Delta P_i^{(n)}|> \epsilon)$ is upper bounded by $\frac{1}{\epsilon (1-\xi)}  \lVert \Delta P^{(0)} \rVert_\infty$ with the augment of $N$, which indicates the convergence of the sequence $\sum_{n=1}^{N}\Pr( |\Delta P_i^{(n)}|> \epsilon)$  for $\forall \epsilon>0$. By using the Borel-Cantelli lemma \cite{billingsley2008probability}, we obtain
$$\Pr\Big( \limsup_{n\rightarrow\infty}\{|\Delta P_i^{(n)}|> \epsilon \} \Big)=0,\ \forall i\in \mathcal{N}_d,$$
which implies that the proposed random update algorithm converges to the unique Nash equilibrium almost surely when \eqref{convergecondition} is satisfied.
\end{proof}

\bibliographystyle{IEEEtran}
\bibliography{IEEEabrv,references}

\end{document}